\documentclass[11pt,a4pages]{article}

\usepackage{circularresolution}

\tikzstyle{clause}=[minimum size=20pt,inner sep=2pt]
\tikzstyle{formula}=[minimum size=20pt,inner sep=2pt]
\tikzstyle{rule}  =[circle,draw=black,thick,minimum size=15pt,inner sep=2pt]
\tikzstyle{smallrule}  =[circle,draw=black,thick,minimum size=5pt,inner sep=2pt]
\tikzstyle{formulavertex}  =[rectangle,draw=black,thick,minimum size=18pt,inner sep=1pt]
\tikzstyle{formulavertexnobox}  =[minimum size=18pt,inner sep=1pt]
\tikzstyle{inferencevertex}  =[circle,draw=black,thick,minimum size=18pt,inner sep=1pt]
\tikzstyle{inferencevertexsmall}  =[circle,draw=black,thick,minimum size=8pt,inner sep=1pt]
\tikzstyle{edge} = [draw,thick,>=stealth,->]

\begin{document}

\title{{\bf Circular (Yet Sound) Proofs in Propositional Logic}} 

\author{Albert Atserias \\ Universitat
  Polit\`ecnica de Catalunya \and Massimo Lauria \\ Sapienza -
  Universit\`a di Roma} \date{\today}

\newcommand\blfootnote[1]{%
  \begingroup
  \renewcommand\thefootnote{}\footnote{#1}%
  \addtocounter{footnote}{-1}%
  \endgroup
}

\maketitle

\begin{abstract}
Proofs in propositional logic are typically presented as trees of
derived formulas or, alternatively, as directed acyclic graphs of
derived formulas.
This distinction between tree-like vs.\@ dag-like structure is
particularly relevant when making quantitative considerations
regarding, for example, proof size. Here we analyze a more general
type of structural restriction for proofs in rule-based proof systems.
In this definition, proofs are directed graphs of derived formulas in
which cycles are allowed as long as every formula is derived at least
as many times as it is required as a premise. We call such proofs
``circular''. We show that, for all sets of standard inference rules
with single or multiple conclusions, circular proofs are sound.
We start the study of the proof complexity of circular proofs at
Circular Resolution, the circular version of Resolution.
We immediately see that Circular Resolution is stronger than Dag-like
Resolution since, as we show, the propositional encoding of the
pigeonhole principle has circular Resolution proofs of polynomial
size. Furthermore, for derivations of clauses from clauses, we show
that Circular Resolution is, surprisingly, equivalent to
Sherali-Adams, a proof system for reasoning through polynomial
inequalities that has linear programming at its base.
As corollaries we get: 1) polynomial-time (LP-based) algorithms that
find Circular Resolution proofs of constant width, 2) examples that
separate Circular from Dag-like Resolution, such as the pigeonhole
principle and its variants, and 3) exponentially hard cases for
Circular Resolution. Contrary to the case of Circular Resolution, for
Frege we show that circular proofs can be converted into tree-like
proofs with at most polynomial overhead.
 \end{abstract}

\blfootnote{A preliminary shorter version of this paper was published
  in the Proceedings of 22nd International Conference on Theory and
  Applications of Satisfiability Testing (SAT 2019), Lecture Notes in
  Computer Science 11628, Springer 2019, ISBN 978-3-030-24257-2, pp.
  1-18, Lisbon, Portugal, July 9-12, 2019. The full version of this
  paper has been published in ACM Trans. Comput. Logic. 24(3):1--20.
  This pre-print version can also be found at CoRR abs/1802.05266.}

\section{Introduction}

Logical proofs are traditionally presented as sequences of formulas where
each formula is either a hypothesis, or is deduced from some previous
formulas in the sequence by the means of some inference step.
In rule-based proofs systems each inference step is achieved by
instantiating one in some specific and finite set of inference rules.
Equivalently, any such proof can be represented by a directed acyclic graph, or
\introduceterm{dag}, with one vertex for each formula in the sequence,
and edges pointing forward from the premises to the conclusions of
each inference rule application.

In this paper we discuss an alternative and more general way of
composing proofs: we allow cycles in the graph.
In general, and not suprisingly, unlimited circular reasoning of this
type may be unsound. However, when every formula is derived at least
as many times as it is used as a premise of an inference step, we show
that soundness is guaranteed.
Hence we appropriately call these objects with the name of
\emph{circular proofs}.

More formally, our soundness requirement is phrased in terms of
\introduceterm{flow assignments}: each rule application must carry a
\introduceterm{flow}, a positive integer which intuitively means that
in order to produce that many copies of each conclusion of the rule we
must have produced at least that many copies of each premise.
Flow assignments induce a notion of \introduceterm{balance} of
a formula in the proof, which is the difference between the number of
times that the formula is produced as a conclusion and the number of
times that it is required as a premise.
Given these definitions, a proof-graph will be an actual circular
proof if it admits a flow assignment that satisfies the following
\emph{flow-balance} condition: the only formulas of strictly negative
balance are the hypotheses, and the goal formulas display strictly
positive balance.
With this interpretation of flows,
circular proofs have the appealing flavour of a network in which
demands are fulfilled by the hypotheses, and flow towards the
goal formulas, which produce surplus. Accordingly, and in analogy with
the theory of classical network flows
\cite{Schrijver2003Combinatorial}, it makes no difference whether the
flows are required to be integers or real numbers, and valid flow
assignments can be found efficiently, when they exist, by linear
programming techniques.

While proof-graphs with unrestricted cycles are, in general, unsound,
we show that circular proofs \emph{are} sound.
We prove this in two ways.  The first one is combinatorial in
nature and is phrased in the style of traditional soundness proofs in
standard proof systems. Concretely, given a truth assignment that
falsifies a goal formula, the soundness proof constructs a path of
falsified formulas until it reaches a hypothesis, and does so by
induction on the total flow-sum of the flow assignment that satisfies
the flow-balance condition. 
This proof is more informative and intuitive, but it is also
inefficient in the sense that the process of building the path is not
polynomial in the size of the proof.
The second proof is (semi-)algebraic and
is phrased in the style of the duality theorem for linear
programming. Concretely, we phrase the unsoundness of the proof as the
feasibility of a linear program and observe that the existence of a
flow assignment that satisfies the flow-balance condition gives a
witness of its infeasibility.
This second proof is less intuitive but can be efficiently simulated by
non-circular argument in relatively strong proof systems. It will be
useful when trying to understand the power of circular reasoning. 

\paragraph{Proof complexity of circular proofs} With all the
definitions in place, we proceed to the study of the power of circular
proofs from the perspective of propositional proof complexity.

For Resolution, we show that circularity \emph{does} make a real
difference. First we show that the standard propositional formulation
of the pigeonhole principle has Circular Resolution proofs of
polynomial size. This is in sharp contrast with the well-known fact
that Resolution \emph{cannot count}, and that the pigeonhole principle
is exponentially hard for (tree-like and dag-like)
Resolution~\cite{Haken1985}. Second we observe that the LP-based proof
of soundness of Circular Resolution can be formalized in the
Sherali-Adams proof system (with twin variables), which is a proof
system for reasoning with polynomial inequalities that has linear
programming at its base~\cite{SheraliAdams1990}.  Sherali-Adams was
originally conceived as a hierarchy of linear programming relaxations
for integer programs, but it has also been studied from the
perspective of proof complexity in recent years~\cite{Dantchev2007,
  DantchevMartinRhodes2009,
  SegerlindPitassi2012,AtseriasLauriaNordstrom2016}.

Surprisingly, it turns out that the converse simulation is also true!
For deriving clauses from clauses, Sherali-Adams proofs translate
efficiently into Circular Resolution proofs. Moreover, both
translations, the one from Circular Resolution into Sherali-Adams and
its converse, are efficient in terms of their natural parameters:
length/size and width/degree. As corollaries we obtain for Circular
Resolution all the proof complexity-theoretic properties that are
known to hold for Sherali-Adams: 1) a polynomial-time (LP-based) proof
search algorithm for proofs of bounded width, 2) length-width
relationships, 3) separations from dag-like length and width, and 4)
explicit exponentially hard examples.

Going beyond resolution we address the question of how circularity
affects more powerful propositional proof systems.  For Frege systems,
which operate with arbitrary propositional formulas through the
standard textbook inference rules, we show that circularity adds no
power: the circular, dag-like and tree-like variants of Frege
polynomially simulate one another.  The equivalence between the
dag-like and tree-like variants of Frege is well-known
\cite{Kra94BoundedArithmetic}; here we add the circular variant to the
list.  We prove this by formalizing the LP-based proof of soundness
for Circular Frege within Tree-like Frege itself. To achieve this we
make strong use of the formalization of linear arithmetic in Frege
that was developed by Buss in order to get efficient Frege proofs of
the pigeonhole principle~\cite{Buss1987}, and that was developed
further by Goerdt to show that Tree-like Frege simulates the
Cutting Planes proof system~\cite{Goerdt1991CPvsFrege}.

\paragraph{Earlier work} 
The idea of allowing cyclic, circular or non-wellfounded proofs has
been studied by several communities since at least 20 years ago, from
modal $\mu$-calculus~\cite{NiwinskiWalukiewicz1996}, to predicate
logic with inductive definitions~\cite{Brotherston2006Thesis,
  BrotherstonSimpson2010}, fragments of arithmetic~\cite{Simpson2017,
  Das2019}, provability logics~\cite{Shamkanov2020}, and linear
logic~\cite{Fortier2014Thesis}. For classical propositional logic
proper and in the context of proof complexity, we are not aware of any
work on cyclic, circular, or non-wellfounded proofs that appeared
earlier than the conference version of this
paper~\cite{AtseriasLauria2019}. It seems that our flow-based
definition of circular proofs had not been considered before.

Niwi\'nksi and Walukiewicz \cite{NiwinskiWalukiewicz1996} introduced
an infinitary tableau method for the modal $\mu$-calculus. The proofs
are regular infinite trees that are represented by finite graphs with
cycles, along with a decidable \emph{progress condition} on the cycles
to guarantees their soundness. A sequent calculus version of this
tableau method was proposed in \cite{DaxHofmannLange2006}, and
explored further in \cite{Studer2008}. In his PhD thesis, Brotherston
\cite{Brotherston2006Thesis} introduced a \emph{cyclic} proof system
for the extension of first-order logic with inductive definitions; see
also \cite{BrotherstonSimpson2010} for a journal article presentation
of the results. The proofs in~\cite{BrotherstonSimpson2010} are
ordinary proofs of the first-order sequent calculus extended with the
rules that define the inductive predicates, along with a set of
\emph{backedges} that link equal formulas in the proof. The soundness
is guaranteed by an additional \emph{infinite descent condition} along
the cycles that is very much inspired by the progress condition in
Niwi\'nski-Walukiewicz' tableau method. We refer the reader to
Section~8 from~\cite{BrotherstonSimpson2010} for a careful overview of
the various flavours of proofs with cycles for logics with inductive
definitions.

Shoesmith and Smiley \cite{ShoesmithSmiley1978MultipleConclusion}
initiate the study of inference based propositional proofs with
multiple conclusions. In order to do so they introduce a graphical
representation of proofs where nodes represents either formulas or
inference steps, in a way similar to our definition in
Section~\ref{sec:prelim}. While most of that book does not consider
proof with cycles, in Section 10.5 they do mention briefly this
possibility but they do not analyze it any further.

The Sherali-Adams hierarchy of linear programming relaxations has
received considerable attention in recent years for its relevance to
combinatorial optimization and approximation algorithms; see the
original \cite{SheraliAdams1990}, and \cite{AuTuncel2016} for a recent
survey. In its original presentation, the Sherali-Adams hierarchy can
already be thought of as a proof system for reasoning with polynomial
inequalities, with the levels of the hierarchy corresponding to the
degrees of the polynomials. For propositional logic, the system was
studied in \cite{Dantchev2007}, and developed further in
\cite{SegerlindPitassi2012,AtseriasLauriaNordstrom2016}. Those works consider
the version of the proof system in which each propositional variable
$X$ comes with a formal \emph{twin variable} $\bar{X}$, that is to be
interpreted by the negation of $X$. This is the version of
Sherali-Adams that we use. It was already known from
\cite{DantchevMartinRhodes2009} that this version of the Sherali-Adams proof
system polynomially simulates standard Resolution, and has
polynomial-size proofs of the pigeonhole principle.

\section{Preliminaries} \label{sec:prelim}

\subsection{Formulas}

A \emph{literal} is a variable $X$ or the negation of a variable
$\overline{X}$; we also say that literal $\overline{X}$ is the
negation of literal $X$, and vice-versa.
The class of formulas in \emph{negation normal form} is the smallest
class of formulas that contains the literals and is closed under
conjunction~$\wedge$ and disjunction~$\vee$.  If~$A$ is a formula in
negation normal form, we write~$\overline{A}$ for its dual formula,
which is defined recursively as follows: If~$A$ is a literal,
then~$\overline{A}$ is its negation.  If~$A = B \vee C$,
then~$\overline{A} = \overline{B} \wedge \overline{C}$.
If~$A = B \wedge C$,
then~$\overline{A} = \overline{B} \vee \overline{C}$. Note that the
dual of the dual of~$A$ is~$A$ itself. A \emph{truth-assignment} is a
mapping that assigns a truth-value \emph{true}~($1$) or
\emph{false}~($0$) to each variable. Truth-assignments evaluate
formulas in the natural way through the standard interpretations of
negation, conjunction, and disjunction.  The \emph{empty formula} is
denoted by~$\emptyformula$, and is always false by convention. Its
complement~$\overline{\emptyformula}$ is denoted by~$\fullformula$ and
is always true by convention.  If a truth-assignment evaluates a
formula to true, then we say that is satisfies it. A
\emph{substitution} is a mapping that assigns a formula to each
variable. Applying a substitution to a formula means replacing all
variables by the formulas to which they are mapped to by the
substitution, simultaneously all at once.

We think of disjunction as binding \emph{sets} of formulas, or,
equivalently, as a binary operation on formulas that is associative,
commutative and idempotent. This means that the
formula~$(A \vee B) \vee C$ is considered the same
as~$A \vee (B \vee C)$, which we just write as~$A \vee B \vee C$. Also
the formula~$A \vee B$ is considered the same as~$B \vee A$, and the
formula~$A \vee A$ is considered the same as~$A$. Similarly, we view
conjunction as a binary operation on formulas that is associative,
commutative and idempotent.  The empty formula~$\emptyformula$ and its
complement~$\fullformula$ are the neutral elements of~$\vee$
and~$\wedge$, respectively.  Thus the formulas~$\emptyformula \vee A$
and~$\fullformula \wedge A$ are considered the same as~$A$. These
conventions about disjunctions and conjunctions mean that our syntax
for formulas in negation normal form could have been defined as
follows:~(1)~every literal is a formula,~(2)~if~$S$ is a set of
formulas none of which starts with~$\bigvee$, then~$\bigvee\!S$ is a
formula,~(3) if~$S$ is a set of formulas none of which starts
with~$\bigwedge$, then~$\bigwedge\!S$ is a formula, and~(4)~nothing
else is a formula. The empty formula~$\emptyformula$ and its
complement~$\fullformula$ are taken to be~$\bigvee\!\emptyset$
and~$\bigwedge\!\emptyset$, respectively.  We adopt this
\emph{unbounded fan-in} definition of syntax, but continue to use the
notation~$A_1 \vee \cdots \vee A_n$ even if the~$A_i$ may be
disjunctions themselves. The \emph{size}~$s(A)$ of a formula~$A$ is
defined inductively: if~$A$ is a literal, then~$s(A) = 1$, and
if~$A = \bigvee\!S$ or~$A = \bigwedge\!S$,
then~$s(A) = 1+\sum_{B \in S} s(B)$.

An \emph{elementary tautology}
is a formula of the form~$\overline{A} \vee A$, where~$A$ is a
formula. Note that by the definition of the dual of a formula (and the
convention to read disjunctions up to associativity), the
formula~$\overline{A} \vee \overline{B} \vee (A \wedge B)$ is an
elementary tautology.  If~$\Gamma$ is a set of formulas, a disjunction
of formulas in~$\Gamma$ is a formula of the
form~$A_1 \vee \cdots \vee A_m$, where~$m$ is a non-negative integer
and each~$A_i$ is a formula in~$\Gamma$. Disjunctions of formulas
in~$\Gamma$ are also called~$\Gamma$-clauses or~$\Gamma$-cedents. A
clause is a disjunction of literals.

\subsection{Inference-Based Proofs}

An \emph{inference rule} is given by a sequence of \emph{premise
  formulas}~$A_1,\ldots,A_r$ and a sequence of \emph{conclusion
  formulas}~$B_1,\ldots,B_s$ with the property that every truth
assignment that satisfies all the premises also satisfies all the
conclusions. Here are four important examples:
\begin{equation} 
\frac{}{A \vee \overline{A}} \;\;\;\;\;\;\;\; \frac{C
\vee A \;\;\;\;\;\;\;\; D \vee \overline{A}}{C \vee D}
\;\;\;\;\;\;\;\; \frac{C \vee A \;\;\;\;\;\;\;\; D \vee B}{C \vee D
\vee (A \wedge B)} \;\;\;\;\;\;\;\; \frac{C}{C \vee D}.
\label{eqn:Fregerules}
\end{equation}
These are the standard inference rules of a Tait-style calculus for
propositional logic \cite{Tait1968}. The rules are called
\emph{axiom}, \emph{cut}, \emph{introduction of conjunction}, and
\emph{weakening}, respectively. An \emph{instance} of an inference
rule is obtained from applying a substitution to its variables. Note
that every instance of a rule is a rule itself, which has its own
premise formulas and conclusion formulas.
 
In almost all classical examples in the literature, inference rules
have a single conclusion formula. The reason for this is that for
classical (i.e., non-circular) proofs one may simply split a rule
with~$s$ conclusion formulas into~$s$ different single-conclusion
rules, with little conceptual change. However, for circular proofs a
specific rule with two conclusion formulas will play an important
role; this is the \emph{symmetric weakening}, or \emph{split}, rule:
\begin{equation}
\frac{C}{C \vee A \;\;\;\;\;\;\;\; C \vee \overline{A}}.
\label{eqn:splitrule} 
\end{equation} 
When we apply~\eqref{eqn:splitrule} we say that we split~$C$ on~$A$.
In all these examples the formulas~$C$,~$D$, and~$A$ could be a single
literal, the empty formula~$\emptyformula$, or its
complement~$\fullformula$.

Fix a set~$\mathscr{R}$ of inference rules, a set~$A_1,\ldots,A_m$ of
\emph{hypothesis formulas}, and a \emph{goal formula}~$A$. A
\emph{proof of~$A$ from~$A_1,\ldots,A_m$}, also called a
\emph{derivation}, is a finite sequence of formulas that ends in~$A$
and such that each formula in the sequence is either contained
in~$A_1,\ldots,A_m$, or is one of the conclusion formulas of an
instance of an inference rule in~$\mathscr{R}$ that has all its
premise formulas appearing earlier in the sequence. A derivation
of~$A$ \emph{from nothing} is also called a \emph{proof of~$A$}. A
\emph{refutation of~$A_1,\ldots,A_m$} is a derivation of the empty
formula~$\emptyformula$ from~$A_1,\ldots,A_m$. The \emph{length} of
the derivation is the length of the sequence, and its \emph{size} is
the sum of the sizes of the formulas in the sequence.

Proofs and derivations are naturally represented through directed
acyclic graphs, a.k.a.\ \emph{dags}; see
Figure~\ref{fig:dagproof}. The graph has one \emph{formula-vertex} for
each formula in the sequence, and one \emph{inference-vertex} for each
inference step that produces a formula in the sequence. Each
formula-vertex is labelled by the corresponding formula, and each
inference-vertex is labelled by the corresponding instance of the
corresponding inference rule. Each inference-vertex that is labelled
by an inference rule that has~$r$ premise formulas and~$s$ conclusion
formulas has, accordingly,~$r$ incoming edges from the corresponding
premise formula-vertices, and at least one and at most~$s$ outgoing
edges towards the corresponding conclusion formula-vertices. The
directed acyclic graph of a proof~$\Pi$ is its \emph{proof-graph}, and
is denoted by~$G(\Pi)$. A proof~$\Pi$ is called \emph{tree-like}
if~$G(\Pi)$ is a tree.

\begin{figure}
\begin{center}
\begin{tikzpicture} \draw[help lines, white, use as bounding box] (-3,-0.5) grid (8,2.5);
\node[formulavertex] (a1) at (-2,2) {$A_1$};
\node[formulavertex] (a2) at (-2,1) {$A_2$};
\node[inferencevertex] (r1) at (-0.5,1.5) {$R_1$};
\node[inferencevertex] (r2) at (-0.5,0) {$R_2$};
\node[formulavertex] (a3) at (1,1) {$A_3$};
\node[inferencevertex] (r3) at (2.5,0.5) {$R_3$};
\node[formulavertex] (a4) at (1,0) {$A_4$};
\node[formulavertex] (a5) at (4,0.5) {$A_5$};
\node[inferencevertex] (r4) at (2.5,1.5) {$R_4$};
\node[formulavertex] (a6) at (4,1.5) {$A_6$};
\node[inferencevertex] (r5) at (5.5,1) {$R_5$};
\node[formulavertex] (a7) at (7,1) {$A_7$};
\draw[edge] (a1) to (r1);
\draw[edge] (a2) to (r1);
\draw[edge] (r1) to (a3);
\draw[edge] (r2) to (a4);
\draw[edge] (a4) to (r3);
\draw[edge] (a3) to (r4);
\draw[edge] (a3) to (r3);
\draw[edge] (r3) to (a5);
\draw[edge] (r4) to (a5);
\draw[edge] (r4) to (a6);
\draw[edge] (a6) to (r5);
\draw[edge] (a5) to (r5);
\draw[edge] (r5) to (a7);
\end{tikzpicture}
\end{center}
\caption{The directed acyclic graph representation of a proof of $A_7$
  from the set of hypothesis formulas $A_1$ and $A_2$ through the
  inference rules $R_1,\ldots,R_5$. Formula-vertices are represented
  by boxes and inference-vertices are represented by circles. Formula
  $A_3$ is used twice as the premise of an inference, and $A_5$ is
  produced twice as the conclusion of an inference. All rules except
  $R_4$ have exactly one conclusion formula; $R_4$ has two. All rules
  except $R_2$ have at least one premise formula; $R_2$ has none.}
\label{fig:dagproof}
\end{figure}

\subsection{Frege and Resolution Proof Systems}

An inference-based proof system is given by a set of allowed inference
rules, a set of allowed formulas, and a set of allowed
proof-graphs. Two typical sets of allowed proof-graphs are the set of
dags, for \emph{dag-like} proofs, and the set of trees, for
\emph{tree-like} proofs. If the set of allowed proof-graphs is
omitted, dag-like is assumed by default. A proof system~$P$ is said to
\emph{polynomial simulate} another proof system~$P'$ if there is a
polynomial-time algorithm that, given a proof~$\Pi'$ in~$P'$ as input,
computes a proof~$\Pi$ in~$P$, such that~$\Pi$ has the same goal
formula and the same hypothesis formulas as~$\Pi'$.  \emph{Frege} and
\emph{Resolution} are both inference-based proof systems, as defined
next.

In our definition of Frege the set of allowed inference rules are
axiom, cut, introduction of conjunction, and weakening as defined
in~\eqref{eqn:Fregerules}, and the set of allowed formulas is the set
of all formulas in negation normal form. Being equivalent to a
Tait-style calculus, our definition of Frege is sound and
(implicationally) complete for formulas in negation normal form. This
means that if~$A$ has a Frege proof from the set of hypothesis
formulas~$A_1,\ldots,A_m$, then every truth assignment that satisfies
all the formulas in~$A_1,\ldots,A_m$ also satisfies~$A$, and
vice-versa.

In our definition of Resolution the only allowed inference rule is cut
and the allowed formulas are the clauses. This proof system is sound
and complete as a refutation system. This means that if the set of
clauses $A_1,\ldots,A_m$ has a Resolution refutation, then there is no
truth-assignment that satisfies all clauses $A_1,\ldots,A_m$
simultaneously, and vice-versa. In order to turn Resolution into a
proof system that is sound and complete for deriving clauses from
clauses, one needs to add the axiom and weakening rules to the set of
allowed rules. The \emph{width} of a Resolution proof is the number of
literals of its largest clause.

\subsection{Frege and Resolution with Symmetric Rules}
\label{sec:symmetricrules}

Consider an inference-based proof system in which elementary
tautologies of the form~$A \vee \overline{A}$ may be introduced at any
point in the proof through the axiom rule, and that in addition has
the following two nicely symmetric-looking inference rules:
\begin{equation}
\frac{C \vee A \;\;\;\;\;\;\;\; C \vee \overline{A}}{C}
\;\;\;\;\;\;\;\;\;\;\;\;
\frac{C}{C \vee A \;\;\;\;\;\;\;\; C \vee \overline{A}}.
\label{eqn:homogeneousFrege} 
\end{equation}
These rules are called \emph{symmetric cut} and \emph{symmetric
  weakening}, or \emph{split}, respectively. Note the subtle
difference between the symmetric cut rule and the standard cut rule
in~\eqref{eqn:Fregerules}: in the symmetric cut rule, both premise
formulas have the same \emph{side formula}~$C$. This difference is
minor: an application of the non-symmetric cut rule that
derives~$C \vee D$ from~$C \vee A$ and~$D \vee \overline{A}$ may be
efficiently simulated as follows (here and in what follows, the
applicability of the rules has to be read up to associativity,
symmetry, and idempotency of disjunctions and conjunctions, and the
second conclusion of the split rule has been suppressed from the list
of derived formulas whenever it is not needed):
\begin{center}
\begin{tabular}{llll}
1. & $C \vee A \vee D$ & \;\;\; & by split on $C \vee A$, \\
2. & $D \vee \overline{A} \vee C$ & & by
split on $D \vee \overline{A}$, \\
3. & $C \vee D$ & & by symmetric cut on 1 and 2.
\end{tabular}
\end{center}
Note also that the
rules in~\eqref{eqn:homogeneousFrege} do not include a rule for
\emph{introduction of conjunction} as in~\eqref{eqn:Fregerules}. In
the presence of the elementary tautologies (or, equivalently, the
axiom rule), this difference is again minor: an application of the
introduction of conjunction rule that derives $C \vee D \vee (A \wedge
B)$ from $C \vee A$ and $D \vee B$ may be efficiently simulated by the
following sequence:
\begin{center}
\begin{tabular}{llll}
1. & $\overline{A} \vee \overline{B} \vee (A \wedge B)$ & \;\;\;& as an elementary tautology, \\
2. & $\overline{A} \vee \overline{B} \vee (A \wedge B) \vee C$ & & by split on 1, \\
3. & $C \vee A \vee \overline{B} \vee (A \wedge B)$ & & by split on $C \vee A$, \\
4. & $C \vee \overline{B} \vee (A \wedge B)$ & & by symmetric cut on 2 and 3, \\
5. & $C \vee \overline{B} \vee (A \wedge B) \vee D$ & & by split on 4, \\
6. & $D \vee B \vee C \vee (A \wedge B)$ & & by split on $D \vee B$, \\
7. & $C \vee D \vee (A \wedge B)$ & & by symmetric cut on 5 and 6.
\end{tabular}
\end{center} 
Thus, for all practical purposes, the Frege proof system as defined in
the previous section and the proof system defined here are equivalent.
The same observation applies to Resolution. In this case the
elementary tautologies are of the form $X \vee \overline{X}$, where
$X$ is a variable, and the instances of the symmetric cut and split
rules in~\eqref{eqn:homogeneousFrege} have a variable for its
\emph{cut formula} $A$.
Note that an application of the standard weakening rule that derives
the clause $C \vee D$ from the clause $C$ may be efficiently simulated
by $|D|$ many applications of the split rule by introducing one
literal at a time; here $|D|$ denotes the number of literals in $D$.

\subsection{Sherali-Adams Proof System}

Let $X_1,\ldots,X_n$ be variables that are intended to range over
$\{0,1\}$, and let $\bar{X}_1,\ldots,\bar{X}_n$ be their \emph{twins},
with the intended meaning that $\bar{X}_i = 1-X_i$.  Let
$A_1,\ldots,A_m$ and $A$ be polynomials on the variables
$X_1,\ldots,X_n$ and $\bar{X}_1,\ldots,\bar{X}_n$.  A
\emph{Sherali-Adams proof of $A \geq 0$ from $A_1 \geq 0,\ldots,A_m
  \geq 0$} is a polynomial identity of the form
\begin{equation}
\sum_{j=1}^t Q_j P_j = A, \label{eqn:sheraliadamsproof}
\end{equation}
where each $Q_j$ is a non-negative linear combination of monomials on
the variables $X_1,\ldots,X_n$ and $\bar{X}_1,\ldots,\bar{X}_n$, and
each $P_j$ is a polynomial among $A_1,\ldots,A_m$ or one among the
following set of \emph{basic} polynomials:
\begin{equation}
\begin{array}{lllll}
X_i-X_i^2, & 1-X_i-\bar{X}_i, &
X_i^2-X_i, & X_i+\bar{X}_i-1, & 1. 
\end{array} 
\label{eqn:allowedindefinition}
\end{equation}
Observe that all basic polynomials, as well as all~$Q_{j}$'s, being
non-negative linear combinations of monomials, are non-negative
on~$\{0,1\}$. If~$A_1,\ldots,A_m$ are also non-negative, then,
by~\eqref{eqn:sheraliadamsproof},~$A$ must be non-negative. It follows
that the proof system is sound. By Theorem~3 in the original paper by
Sherali and Adams~\cite{SheraliAdams1990}, the proof system is
complete for deriving linear inequalities from linear
inequalities. Therefore, when clauses are encoded by linear
inequalities in the natural way, the proof system is also complete
(see also Lemma~4.2 in~\cite{AtseriasLauriaNordstrom2016} and also
Section~\ref{sec:resolution} in this paper).
In a Sherali-Adams proof each~$Q_{j}$ is given explicitly as a
positive linear combination of monomials, where the coefficients in
the linear combination are rational numbers written in binary. It
follows that the identity asserted by
equation~\eqref{eqn:sheraliadamsproof} can be checked in polynomial
time with respect to the length of the proof itself. These three facts
together imply that Sherali-Adams is a Cook-Reckhow proof system.

The \emph{degree} of the proof is the maximum of the degrees of the
polynomials~$Q_jP_j$ in~\eqref{eqn:sheraliadamsproof}.  The
\emph{monomial size} of the proof is the sum of the monomial sizes of
the polynomials~$Q_jP_j$ in~\eqref{eqn:sheraliadamsproof}, where the
monomial size of a polynomial is the number of monomials with non-zero
coefficient in its unique representation as a linear combination of
monomials. The \emph{bit size} of the proof is the sum of the bit
sizes of the polynomials~$Q_jP_j$ in~\eqref{eqn:sheraliadamsproof},
where the bit size of a polynomial is the sum of the bit sizes of its
terms, where the bit size of a term is the number of bits that it
takes to describe the monomial and to write the rational coefficient
in binary.

\section{Circular Proofs} \label{sec:definition}

Informally, a circular proof will be defined as a ``proof with
cycles''. Formally such objects will be called \emph{circular
  pre-proofs} because, in general, they are not sound. We define
\emph{circular proofs} by adding a global yet efficiently checkable
requirement on the definition of pre-proof that guarantees its
soundness.  

\subsection{Definition}

A \emph{circular pre-proof} is a directed graph with two types of
vertices: formula-vertices and inference-vertices. All edges of the
graph go from a formula-vertex to an inference-vertex, or from an
inference-vertex to a formula-vertex. Thus, the graph is
\emph{bipartite}. Each formula-vertex is labelled by a formula, and
each inference-vertex is labelled by an instance of an inference rule
that has the formulas that label its in-neighbors as premises, and the
formulas that label its out-neighbors as conclusions. If~$\Pi$ is a
pre-proof, we use~$G(\Pi)$ to denote the underlying bipartite graph,
ignoring the labels. When~$\Pi$ is clear from the context, we
write~$I$ and~$J$ for the sets of inference- and formula-vertices
of~$G(\Pi)$, respectively, and~$N^-(u)$ and~$N^+(u)$, respectively,
for the sets of in- and out-neighbours of a vertex~$u$
of~$G(\Pi)$. Figure~\ref{fig:circularpreproof} illustrates these
definitions.

\begin{figure}
\begin{center}
\begin{tikzpicture} \draw[help lines, white, use as bounding box] (-3,-0.5) grid (5,2);
\node[formulavertex] (a1) at (-2,0.5) {$A_1$};
\node[inferencevertex] (r1) at (-0.5,1) {$R_1$};
\node[inferencevertex] (r2) at (-0.5,0) {$R_2$};
\node[formulavertex] (b1bis) at (1,1) {$B_1$};
\node[inferencevertex] (r3) at (2.5,0.5) {$R_3$};
\node[formulavertex] (a2) at (1,0) {$A_2$};
\node[formulavertex] (a3) at (4,0.5) {$A_3$};
\draw[edge] (a1) to (r1);
\draw[edge] (r1) to (b1bis);
\draw[edge] (r2) to (a2);
\draw[edge] (a2) to (r3);
\draw[edge] (b1bis) to (r3);
\draw[edge] (r3) to (a3);
\draw[thick] plot [smooth] coordinates { (1.3,1.1) (1.6,1.7) (-1.1,1.7) (-0.95,1.1) };
\draw[thick,-stealth] (-0.95,1.11) -- (-0.85,1.1);
\end{tikzpicture}
\caption{A circular pre-proof. The vertex labelled~$R_1$ has two
  premises,~$A_1$ and~$B_1$, and one conclusion,~$B_1$ itself. The
  vertex labelled~$R_3$ has two premises,~$B_1$ and~$A_2$, and one
  conclusion,~$A_3$. The vertex labelled~$R_2$ has no premises and one
  conclusion,~$A_2$.}
\label{fig:circularpreproof}
\end{center}
\end{figure}

By the correspondence between proofs and their dags as defined in
Section~\ref{sec:prelim}, a circular pre-proof whose underlying graph
is a dag is just the same as an ordinary proof. However, general
circular pre-proofs need not be sound; see
Figure~\ref{fig:circularunsound} for an example of an unsound circular
pre-proof. In order to ensure soundness we need to require a global
condition~as~defined~next.

\begin{figure}
\begin{center}
\begin{tikzpicture} \draw[help lines, white, use as bounding box] (-3,-0.5) grid (5.5,2.5);
\node[inferencevertex] (r1) at (-2.5,1) {\tiny{AX}};
\node[formulavertex] (a1) at (-1,1) {$X \vee \overline{X}$};
\node[inferencevertex] (r2) at (0.5,1.8) {\tiny{CUT}};
\node[inferencevertex] (r3) at (0.5,0.2) {\tiny{CUT}};
\node[formulavertex] (a2) at (2,1.5) {$X$};
\node[formulavertex] (a3) at (2,0.5) {$\overline{X}$};
\node[inferencevertex] (r4) at (3.5,1) {\tiny{CUT}};
\node[formulavertex] (a4) at (5,1) {$0$};
\draw[edge] (r1) to (a1);
\draw[edge] (a1) to (r2);
\draw[edge] (a1) to (r3);
\draw[edge] (r2) to (a2);
\draw[edge] (r3) to (a3);
\draw[edge] (a2) to (r4);
\draw[edge] (a3) to (r4);
\draw[edge] (r4) to (a4);
\draw[thick] plot [smooth] coordinates { (2.3,1.7) (2.5,2.2) (-0.2,2.3) (0.1,1.9) };
\draw[thick,-stealth] (0.1,1.91) -- (0.2,1.9);
\draw[thick] plot [smooth] coordinates { (2.3,0.3) (2.5,-0.2) (-0.2,-0.3) (0.1,0.1) };
\draw[thick,-stealth] (0.1,0.09) -- (0.2,0.1);
\end{tikzpicture}
\end{center}
\caption{An unsound circular pre-proof: the false empty
  formula~$\emptyformula$ is derived from no hypotheses. We note that
  if we were to assign positive weights to the inference-vertices,
  then it would always be the case that the sum of the weights that
  enter~$X$ minus the sum of the weights that leave~$X$ would always
  be negative (and the same for~$\bar{X}$). As we will see, this turns
  out to be the \emph{only reason} for it not being sound.}
\label{fig:circularunsound}
\end{figure}

A \emph{flow assignment} for a circular pre-proof~$\Pi$ is an
assignment~$F : I \rightarrow \mathbb{R}^+$ of positive real weights,
or \emph{flows}, where~$I$ is the set of inference-vertices of the
graph~$G(\Pi)$ of~$\Pi$. The flow-extended graph that labels each
inference-vertex~$w$ of~$G(\Pi)$ by its flow~$F(w)$ is
denoted by~$G(\Pi,F)$. The \emph{inflow} of a formula-vertex
in~$G(\Pi,F)$ is the sum of the flows of its in-neighbours. Similarly,
the \emph{outflow} of a formula-vertex in~$G(\Pi,F)$ is the sum of the
flows of its out-neighbours. The \emph{balance} of a
formula-vertex~$u$ of~$G(\Pi,F)$ is the inflow minus the outflow
of~$u$, and is denoted by~$B(u)$. In symbols,
\begin{equation}
B(u) := \sum_{w \in N^-(u)} F(w) - \sum_{w \in N^+(u)} F(w).
\label{eqn:balance}
\end{equation}
This notion allows us to define \emph{sources} and \emph{sinks}
in~$G(\Pi,F)$.
The formula-vertices of strictly negative balance are the sources
of~$G(\Pi,F)$, and those of strictly positive balance are the sinks
of~$G(\Pi,F)$.
We think of flow assignments as witnessing a proof of a formula that
labels a sink from the set of formulas that label the
sources. Concretely, for a given set of hypothesis
formulas~$\mathscr{H}$ and a given goal formula~$A$, we say that the
flow assignment \emph{witnesses a proof of~$A$ from~$\mathscr{H}$} if
every source of~$G(\Pi,F)$ is labelled by a formula in~$\mathscr{H}$,
and some sink of~$G(\Pi,F)$ is labelled by the formula~$A$.

Finally, a \emph{circular proof of~$A$ from~$\mathscr{H}$} is a
circular pre-proof for which there exists a flow assignment that
witnesses a proof of~$A$ from~$\mathscr{H}$. The \emph{length} of a
circular proof~$\Pi$ is the number of vertices of~$G(\Pi)$, and the
\emph{size} of~$\Pi$ is the sum of the sizes of the formulas that
label its formula-vertices. Note that this definition of size does not
depend on the weights that witness the proof. As we will see in the
next section, such weights may be assumed to be integral and have
small bit-complexity.

\subsection{Checking the Global Condition}

We still need to argue two facts about circular proofs: 1) that the
existence of a witnessing flow assignment guarantees soundness, and 2)
that its existence can be checked algorithmically in an efficient
way. Soundness is proved in the next section. Here we argue that its
existence can be checked efficiently. One way to do this is by
solving a linear program.

\begin{lemma} \label{lem:checking} There is a polynomial-time
  algorithm that, given as input a circular pre-proof~$\Pi$, a finite
  set of hypothesis formulas~$\mathscr{H}$, and a goal formula~$A$,
  returns a flow assignment for~$\Pi$ that witnesses a proof of~$A$
  from~$\mathscr{H}$, if it exists.
\end{lemma}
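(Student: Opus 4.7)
The plan is to express the existence of a witnessing flow assignment as the feasibility of a small family of linear programs (LPs), each solvable in polynomial time. Let $I$ and $J$ be the inference- and formula-vertex sets of $G(\Pi)$, and let $J_{\mathscr{H}}, J_A \subseteq J$ be the subsets of formula-vertices labelled by a formula in $\mathscr{H}$ and by the goal $A$, respectively. Introducing one real variable $F(w)$ per inference-vertex $w \in I$, each balance $B(u)$ is by~\eqref{eqn:balance} a fixed linear form in these variables, so the three defining conditions of a witnessing flow assignment --- (i) $F(w) > 0$ for all $w \in I$, (ii) $B(u) \geq 0$ for all $u \in J \setminus J_{\mathscr{H}}$, and (iii) $B(v) > 0$ for some $v \in J_A$ --- are all linear in $F$.

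The main obstacle is that a linear program natively handles only non-strict inequalities, whereas (i) and (iii) are strict. To get around this I would exploit the fact that the system is homogeneous: if $F$ satisfies (i)--(iii) then so does $\lambda F$ for every real $\lambda > 0$. Accordingly, for each candidate sink $v \in J_A$ I would solve the normalized LP
\begin{equation*}
F(w) \geq 1 \text{ for all } w \in I, \qquad B(u) \geq 0 \text{ for all } u \in J \setminus J_{\mathscr{H}}, \qquad B(v) \geq 1,
\end{equation*}
and return any feasible solution found. If all $|J_A|$ of these LPs are infeasible, then no witnessing flow assignment exists. Equivalence with the original problem is a one-line scaling argument: a normalized-LP solution satisfies (i)--(iii) directly since $1 > 0$; conversely, given any $F^*$ satisfying (i)--(iii) together with a choice of $v \in J_A$ having $B^*(v) > 0$, dividing $F^*$ by $\alpha := \min\bigl(\min_{w \in I} F^*(w),\, B^*(v)\bigr) > 0$ produces a feasible solution of the normalized LP for that $v$.

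Polynomial running time then follows from the classical polynomial-time algorithms for LP (e.g.\ the ellipsoid method), which moreover return rational solutions of polynomial bit complexity in the input size. The LP has $|I|$ variables and $O(|I|+|J|)$ constraints, both polynomial in the length of $\Pi$, and we solve at most $|J_A|$ of them, so the whole procedure runs in polynomial time. Beyond the homogeneity trick that removes the strict inequalities, the argument is entirely routine; the real conceptual content of the lemma is just the observation that the flow-balance condition is a homogeneous system of linear inequalities.
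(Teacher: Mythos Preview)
Your proposal is correct and follows essentially the same approach as the paper: express the existence of a witnessing flow as LP feasibility, using homogeneity to replace the strict inequalities $>0$ by the non-strict $\geq 1$, and then appeal to polynomial-time solvability of LP. The only cosmetic difference is that the paper fixes a single formula-vertex $a$ labelled by $A$ and writes one LP, whereas you allow several such vertices and solve one LP per candidate sink in $J_A$; the scaling argument is identical.
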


\begin{proof}
  Let~$V = I \cup J$ be the set of vertices of the graph~$G(\Pi)$
  of~$\Pi$, partitioned into the set~$I$ of inference-vertices and the
  set~$J$ of formula-vertices.
  Observe that~$N^-(u) \subseteq I$ and~$N^+(u) \subseteq I$ for
  each~$u \in J$. Let~$H \subseteq J$ be the set of formula-vertices
  whose labels are in~$\mathscr{H}$, and let~$a \in J$ be a
  formula-vertex whose label is~$A$ and gets positive balance under
  some flow assignment for~$\Pi$. For each~$w$ in~$I$, let~$Y_w$
  denote a real-valued variable and consider the following instance of
  the linear programming feasibility problem:
\begin{equation*}
(P): \;\;
\begin{array}{lll}
\sum_{w \in N^-(u)} Y_w - \sum_{w \in N^+(u)} Y_w \geq 1 & & \text{ for $u = a$, } \\
\sum_{w \in N^-(u)} Y_w - \sum_{w \in N^+(u)} Y_w \geq 0 & \;\;\;\;\; & \text{ for each $u \in J\setminus (H \cup \{a\})$, } \\
Y_w \geq 1 & & \text{ for each $w \in I$. }
\end{array}
\end{equation*}
We claim that~$(P)$ has a feasible solution~$(y_w)_{w \in I}$ if and
only if there exists a flow
assignment~$F : I \rightarrow \mathbb{R}^+$ that witnesses a proof
of~$A$ from~$\mathscr{H}$ by making the balance~$B(a)$ of~$a$
positive. For the \emph{only if} direction,
define~$F : I \rightarrow \mathbb{R}^+$ by~$F(w) := y_w$, and read-off
the required conditions for~$F$ from the inequalities that
define~$(P)$. For the \emph{if} direction, define~$y_w := F(w)/D$,
where~$D$ is the minimum in the finite
set~$\{ F(w) : w \in I \} \cup \{ B(a) \}$ and~$B(a)$ denotes the
balance of~$a$.  Observe that~$D$ is strictly positive by definition
and the choice of~$a$. The inequalities of~$(P)$ are satisfied
by~$(y_w)_{w \in I}$ also by definition, and by the choice
of~$D$. Since the linear programming feasibility problem can be solved
in polynomial time in the size of the input, the lemma follows.
\end{proof}

By elementary facts about linear programming
(see~\cite{Schrijver2003Combinatorial}), it follows from this proof
that if there is a flow assignment that witnesses a proof, then there
is one with flows that are rational numbers whose bit-complexity is at
most polynomial in the length of the circular pre-proof. By taking
common denominators and multiplying through, the flows can even be
taken to be positive integers of bit-complexity still polynomial in
the length of the pre-proof. We collect these observations in a lemma.

\begin{lemma} \label{lem:integralflow} Let~$\Pi$ be a circular
  pre-proof of length~$\ell$. For every flow assignment~$F$ for~$\Pi$
  there exists another flow assignment~$F'$ for~$\Pi$ such that:
\begin{enumerate} \itemsep=0pt
\item $F'(w)$ 
is a positive integer bounded by $\ell!$, for every
inference-vertex $w$ of $G(\Pi)$,
\item $G(\Pi,F)$ and~$G(\Pi,F')$ have the same sets of sources and
  sinks.
\end{enumerate}
\end{lemma}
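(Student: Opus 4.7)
The plan is to encode the desired integer flow assignment as the feasibility of a linear program in the $|I| \leq \ell$ flow variables, whose coefficient matrix and right-hand side are valued in $\{-1, 0, 1\}$; to use the given $F$ (after suitable scaling) to certify feasibility; and then to pass to a basic feasible solution, bounding its coordinates via Cramer's rule and the Leibniz determinantal bound.

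First, let $S$ and $T$ denote the sets of sources and sinks of $G(\Pi, F)$ respectively, and let $Z = J \setminus (S \cup T)$, which consists of formula-vertices of balance exactly $0$ under $F$. I would consider the linear program in real variables $\{Y_w : w \in I\}$ with the constraints $Y_w \geq 1$ for each $w \in I$, $\sum_{w \in N^-(u)} Y_w - \sum_{w \in N^+(u)} Y_w \geq 1$ for each $u \in T$, $\sum_{w \in N^-(u)} Y_w - \sum_{w \in N^+(u)} Y_w = 0$ for each $u \in Z$, and $\sum_{w \in N^-(u)} Y_w - \sum_{w \in N^+(u)} Y_w \leq -1$ for each $u \in S$. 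By construction, any positive integer solution of this LP, regarded as a flow assignment, has sources exactly $S$ and sinks exactly $T$, so the only remaining tasks are to exhibit a feasible point and then make it integral.

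Next, I would verify feasibility by scaling: pick any real $K > 0$ large enough that $K \cdot F(w) \geq 1$ for all $w \in I$ and $K \cdot |B(u)| \geq 1$ for all $u \in S \cup T$. Then $Y_w := K \cdot F(w)$ satisfies every constraint, since balances scale linearly with the flow and the balances at vertices of $Z$ vanish under $F$ by definition.

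Finally, the polyhedron is pointed thanks to the bounds $Y_w \geq 1$, so it has a vertex. At any such vertex, Cramer's rule expresses each coordinate as $Y^*_w = p_w/q$, where $q$ and the $p_w$'s are, up to sign, determinants of $|I| \times |I|$ submatrices of the augmented constraint matrix. Every entry of these matrices lies in $\{-1, 0, 1\}$, so Leibniz's formula gives $|q|, |p_w| \leq |I|! \leq \ell!$. Taking $q > 0$ without loss of generality and setting $F'(w) := p_w$ yields a positive integer flow with $F'(w) \leq \ell!$; and because scaling $Y^*$ by the positive integer $q$ scales every balance by $q$, the sign pattern of the balances is preserved, so $G(\Pi, F')$ has the same sources and sinks as $G(\Pi, F)$. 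The main technical point to watch is to check that the chosen constraint matrix is genuinely $\{-1, 0, 1\}$-valued, so that the Leibniz bound applies to the denominators and numerators produced by Cramer's rule; beyond that, the argument is bookkeeping around the standard rational-vertex theorem for linear programming.
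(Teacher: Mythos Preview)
Your proposal is correct and follows essentially the same strategy as the paper: set up a linear program in the flow variables with $\{-1,0,1\}$ coefficients, certify feasibility by scaling $F$, pass to a basic feasible solution, and clear denominators via Cramer's rule and the Leibniz bound to get integers bounded by $\ell!$.

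The one noteworthy difference is in the LP itself. The paper's program $(Q)$ imposes only $\geq 1$ at sinks, $\geq 0$ at balanced vertices, and no constraint at sources; strictly speaking this guarantees only that the new sources are contained in $S$ and the new sinks contain $T$, which is all that is actually needed downstream but is weaker than the ``same sets'' claim in the lemma statement. Your LP adds the equality constraints on $Z$ and the $\leq -1$ constraints on $S$, so that any feasible point has source set exactly $S$ and sink set exactly $T$; this matches the lemma as stated more directly, at no extra cost since the augmented matrix remains $\{-1,0,1\}$-valued.
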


\begin{proof}
Let $I$ and $J$ be the sets of inference- and formula-vertices of
$G(\Pi)$.  Let $S \subseteq J$ and $T \subseteq J$ be the sets of
sources and sinks of $G(\Pi,F)$, respectively. Consider the following
variant of the linear program $(P)$ above:
\begin{equation*}
(Q): \;\;
\begin{array}{lll}
\sum_{w \in N^-(u)} Y_w - \sum_{w \in N^+(u)} Y_w \geq 1 & & \text{ for each $u \in T$, } \\
\sum_{w \in N^-(u)} Y_w - \sum_{w \in N^+(u)} Y_w \geq 0 & \;\;\;\;\; & \text{ for each $u \in J\setminus (S \cup T)$, } \\
Y_w \geq 1 & & \text{ for each $w \in I$. }
\end{array}
\end{equation*}
When we transform~$(Q)$ it into standard form by adding
exactly~$|J|+|I|-|S|$ many slack variables, the result will be a
linear program of the form~$Mx = b$,~$x \geq 0$ where~$x$ is a vector
of~$2|I|+|J|-|S|$ variables,~$M$ is a constraint matrix of
dimensions~$(|I|+|J|-|S|) \times (2|I|+|J|-|S|)$, and~$b$ is a
right-hand side~$(|I|+|J|-|S|)$-vector. Moreover, each coefficient in the
matrix~$M$ and the vector~$b$ will be in~$\{-1,0,1\}$. Since this
linear program has a solution (the one given by~$F$ adequately
extended to the slack variables), it also has a basic feasible
solution~$(x^*_u)_{u \in V}$. Each component~$x^*_u$ is either~$0$ or,
by Cramer's Rule, can be written in the form~$\det(N_u)/\det(N)$
where~$N$ is a square submatrix of~$M$, and~$N_u$ is the matrix that
results from replacing the column of~$N$ of index~$u$ by a subvector
of the right-hand side vector~$b$. By ignoring the slack variables we
get a solution~$(y_w)_{w \in I}$ for~$(Q)$ of the same
form. Multiplying through by the common denominator~$\det(N)$ we get
an integral solution~$(y'_w)_{w \in I}$ for~$(Q)$ whose components
have the form~$\det(N_w)$; none is~$0$ because~$Y_w \geq 1$ is one of
the inequalities in~$(Q)$. Each~$N_w$-matrix has dimensions at
most~$(|I|+|J|-|S|) \times (|I|+|J|-|S|)$, and components in~$\{-1,0,1\}$. It
follows that~$y'_w = \det(N_w) \leq (|I|+|J|-|S|)!  =
\ell!$. Taking~$F'(w) := y'_w$ for each~$w \in I$ completes the proof.
\end{proof}

\subsection{Soundness of Circular Proofs}

In this section we develop the soundness proof when the set of
inference rules is fixed to axiom, symmetric cut, and split. See
Section~\ref{sec:prelim} for a discussion on this choice of rules. In
the next section we discuss the general case.

We give two different proofs: one combinatorial and one
(semi-)algebraic.

\begin{theorem} \label{thm:soundness} Let~$\mathscr{R}$ be the set of
  inference rules made of axiom, symmetric cut, and split.
  Let~$\mathscr{H}$ be a set of hypothesis formulas and let~$A$ be a
  goal formula. If there is a circular proof of~$A$ from~$\mathscr{H}$
  through the rules in~$\mathscr{R}$, then every truth assignment that
  satisfies every formula in~$\mathscr{H}$ also satisfies~$A$.
\end{theorem}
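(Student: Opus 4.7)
My plan is to give a (semi-)algebraic soundness argument in the spirit of LP duality. Fix a flow assignment $F$ witnessing a circular proof $\Pi$ of $A$ from $\mathscr{H}$, and suppose, for contradiction, that there exists a truth assignment $\alpha$ that satisfies every formula in $\mathscr{H}$ but falsifies $A$. Writing $t(u) \in \{0,1\}$ for the truth value of the formula labelling a formula-vertex $u$ under $\alpha$, I will estimate the quantity
$$S := \sum_{u \in J} (1 - t(u)) \cdot B(u)$$
in two different ways and derive $S \le 0$ and $S > 0$ simultaneously, producing the required contradiction.

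For the upper bound, substitute the definition of $B(u)$ from~\eqref{eqn:balance} and exchange the order of summation to rewrite
$$S = \sum_{w \in I} F(w) \cdot \bigl( f^+(w) - f^-(w) \bigr),$$
where $f^+(w)$ and $f^-(w)$ denote, respectively, the number of conclusions and the number of premises of the inference at $w$ that $\alpha$ falsifies. A finite case analysis over the three allowed rules shows that $f^+(w) - f^-(w) = 0$ in every case: the unique conclusion of axiom is a tautology, so $f^+ = f^- = 0$; for symmetric cut with premises $C \vee A$, $C \vee \overline{A}$ and conclusion $C$, exactly one of $A$, $\overline{A}$ is true under $\alpha$, so if $C$ is true then nothing is falsified, and if $C$ is false then exactly one premise and the one conclusion are falsified; the check for split with premise $C$ and conclusions $C \vee A$, $C \vee \overline{A}$ is symmetric to this one. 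Since $F(w) > 0$ for every $w \in I$, this forces $S \le 0$.

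For the lower bound, I split the sum according to the sign of $B(u)$. By the witnessing condition, every source (vertex with $B(u) < 0$) carries a label in $\mathscr{H}$, which $\alpha$ satisfies, so $(1 - t(u)) = 0$ and such vertices contribute $0$ to $S$. Every other vertex has $B(u) \ge 0$ and $1 - t(u) \ge 0$, so its contribution is non-negative. Finally, the designated sink $a$ labelled by $A$ has $B(a) > 0$ and $t(a) = 0$, contributing strictly $B(a) > 0$. Summing over all $u \in J$ gives $S > 0$, contradicting the upper bound.

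The step that deserves the most care is the finite case analysis for symmetric cut, where the shared side-formula $C$ on both antecedents is used crucially to match the falsified conclusion with exactly one falsified premise; this is where the structural choice of rules pays off. A combinatorial alternative, as foreshadowed in the introduction, would invoke Lemma~\ref{lem:integralflow} to work with integer flows and induct on $\sum_{w \in I} F(w)$ by tracing a backward walk of falsified formulas from $a$ and, whenever the walk closes a cycle before reaching a source, subtracting one unit of flow around that cycle to reduce total flow; the algebraic argument above, however, seems the cleanest to write up in full.
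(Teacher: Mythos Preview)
Your proposal is correct and essentially reproduces the paper's second (semi-algebraic) proof: your quantity $S=\sum_{u\in J}(1-t(u))B(u)$ is precisely the negation of the sum displayed in~\eqref{eqn:second}, and your per-rule case analysis that $f^+(w)-f^-(w)=0$ is a mild sharpening of the paper's inequalities (which are in fact equalities for axiom, symmetric cut, and split). The paper also supplies a combinatorial first proof by induction on the total integral flow; your closing remark gestures at that alternative, though the paper's actual induction step decreases the flow at a single inference-vertex rather than cancelling flow around a cycle.
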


\begin{proof}[First proof]
  Fix a truth assignment~$\alpha$. We prove the stronger claim that,
  for every circular pre-proof~$\Pi$ from an unspecified set of
  hypothesis formulas, every integral flow assignment~$F$ for~$\Pi$,
  and every sink~$s$ of~$G(\Pi,F)$, if~$\alpha$ falsifies the formula
  that labels~$s$, then~$\alpha$ also falsifies the formula that
  labels some source of~$G(\Pi,F)$.
  The proof is by induction on the the sum of the flows assigned
  by~$F$, which we call the total flow-sum of~$F$.
  Such induction is possible because we restrict to integral flow
  assignments, which is without loss of generality by
  Lemma~\ref{lem:integralflow}.
 
  If the total flow-sum is zero, then there are no inferences, hence
  there are no sinks, and the statement holds vacuously. Assume then
  that the total flow-sum is positive, and let~$s$ be a sink
  of~$G(\Pi,F)$, with balance~$B(s) > 0$, whose labelling formula~$B$
  is falsified by~$\alpha$. Since its balance is positive,~$s$ must
  have at least one in-neighbour~$r$. Since the conclusion formula of
  the rule at~$r$ is falsified by~$\alpha$, some premise formula of
  the rule at~$r$ must exist that is also falsified
  by~$\alpha$. Let~$u$ be the corresponding in-neighbour of~$r$, and
  let~$B(u)$ be its balance.  If~$B(u)$ is negative, then~$u$ is a
  source of~$G(\Pi,F)$, and we are done. Assume then that~$B(u)$ is
  non-negative.

  Let~$\delta := \min\{B(s),F(r)\}$ and note that~$\delta > 0$
  because~$B(s) > 0$ and~$F(r) > 0$. We define a new circular
  pre-proof~$\Pi'$ and an integral flow assignment~$F'$ for~$\Pi'$ to
  which we will apply the induction hypothesis. The construction will
  guarantee the following properties:
  \begin{enumerate}
    \itemsep=0pt
    \item the total flow-sum of $F'$ is smaller than the total
    flow-sum of $F$.
    \item $u$ is a sink of $G(\Pi',F')$ and $s$ is not a source of
    $G(\Pi',F')$,
    \item if $t$ is a source of $G(\Pi',F')$, then $t$ is a source of
    $G(\Pi,F)$ or an out-neighbour of~$r$ in~$G(\Pi)$.
  \end{enumerate}
  {From} this the claim will follow by applying the induction
  hypothesis to~$\Pi'$,~$F'$ and~$u$. Indeed the induction hypothesis
  applies to them by Property~1 and the first half of Property~2. It
  will give a source~$t$ of~$G(\Pi',F')$ whose labelling formula is
  falsified by~$\alpha$.  We argue that~$t$ must also be a source
  of~$G(\Pi,F)$, in which case we are done. To argue for this, assume
  otherwise and apply Property~3 to conclude that~$t$ is an
  out-neighbour of~$r$ in~$G(\Pi)$, which by the second half of
  Property~2 must be different from~$s$ because~$t$ is a source
  of~$G(\Pi',F')$. Recall now that~$s$ is a second out-neighbour
  of~$r$. This can be the case only if~$r$ is a split inference, in
  which case the formulas that label~$s$ and~$t$ must be of the
  form~$C \vee B$ and~$C \vee \overline{B}$, respectively, for
  appropriate formulas~$C$ and~$B$.  But, by assumption,~$\alpha$
  falsifies the formula that labels~$s$, namely~$C \vee B$, which
  means that~$\alpha$ satisfies the formula~$C \vee \overline{B}$ that
  labels~$t$. This is the contradiction we were after.

  It remains to construct~$\Pi'$ and~$F'$ that satisfy
  Properties~1,~2, and~3. We define them by cases according to
  whether~$F(r) > B(s)$ or~$F(r) \leq B(s)$, and then argue for the
  correctness of the construction. In case~$F(r) > B(s)$, and
  hence~$\delta = B(s)$, let~$\Pi'$ be defined as~$\Pi$ without
  change, and let~$F'$ be defined by~$F'(r) := F(r)-\delta$
  and~$F'(w) := F(w)$ for every other~$w \in I \setminus
  \{r\}$. Obviously~$\Pi'$ is still a pre-proof and~$F'$ is an
  integral flow assignment for~$\Pi'$ by the assumption
  that~$F(r) > B(s) = \delta$.
  In case~$F(r) \leq B(s)$, and hence~$\delta = F(r)$, let~$\Pi'$ be
  defined as~$\Pi$ with the inference-step that labels~$r$ removed,
  and let~$F'$ be defined by~$F'(w) := F(w)$ for
  every~$w \in I \setminus \{r\}$. Note that in this case~$\Pi'$ is
  still a pre-proof but perhaps from a larger set of hypothesis
  formulas.

  In both cases the proof of the claim that~$\Pi'$ and~$F'$ satisfy
  Properties~1,~2, and~3 is the same. Property~1 follows from the fact
  that the total flow-sum of~$F'$ is the total flow-sum of~$F$
  minus~$\delta$, and~$\delta > 0$. The first half of Property~2
  follows from the fact that the balance of~$u$ in~$G(\Pi',F')$
  is~$B(u)+\delta$, while~$B(u) \geq 0$ by assumption
  and~$\delta > 0$. The second half of Property~2 follows from the
  fact that the balance of~$s$ in~$G(\Pi',F')$ is~$B(s)-\delta$,
  while~$B(s) \geq \delta$ by choice of~$\delta$. Property~3 follows
  from the fact that the only formula-vertices of~$G(\Pi',F')$ of
  balance smaller than that in~$G(\Pi,F)$ are the out-neighbours
  of~$r$. This completes the proof of the claim, and of the theorem.
\end{proof}

We give a second, different proof of soundness that will
play an important role later.

\begin{proof}[Second proof]
  Let~$\Pi$ be a circular pre-proof and let~$F$ be a flow assignment
  for~$\Pi$ that witnesses a proof of~$A$
  from~$\mathscr{H}$. Let~$\alpha$ be a truth assignment that
  satisfies all the formulas in~$\mathscr{H}$, and let~$s$ be an
  arbitrary formula-vertex in~$G(\Pi)$.  We show that if~$\alpha$
  falsifies the formula that labels~$s$, then~$s$ is not a sink
  of~$G(\Pi,F)$.

  Let~$V = I \cup J$ be the set of vertices of~$G(\Pi)$ partitioned
  into the set~$I$ of inference-vertices, and the set~$J$ of
  formula-vertices.  For every~$u \in J$, let~$A_u$ be the formula
  that labels~$u$ and let~$Z_u := \alpha(A_u)$; the truth-value
  that~$\alpha$ gives to~$A_u$. By inspection of the three allowed
  inference rules, for each~$w \in I$ with labelling inference
  rule~$R$ and in- and out-neighbours~$N^-$ and~$N^+$, respectively,
  we have:
  \begin{center} 
  \begin{tabular}{lcll} 
      $-(1-Z_a)$ & $\geq$ & $0$ & if $R = \text{axiom}$ with $N^+ =
      \{a\}$, \\
  $(1-Z_a) + (1-Z_b) - (1-Z_c)$ &
      $\geq$ & $0$ & if $R = \text{cut}$ with $N^- = \{a,b\}$ and
      $N^+ = \{c\}$, \\ 
  $(1-Z_a) - (1-Z_b) - (1-Z_c)$ & 
      $\geq$ & $0$ & if $R = \text{split}$ with $N^- = \{a\}$ and $N^+ = \{b,c\}$.
  \end{tabular}
  \end{center}
  Multiplying each such inequality by the positive flow~$F(w)$ of~$w$
  and adding up over all~$w \in I$ we get \begin{equation} \sum_{w \in
      I} F(w)\left({\sum_{v \in N^-(w)} (1-Z_v) - \sum_{u \in N^+(w)}
        (1-Z_u)}\right) \geq 0 \label{eqn:first} \end{equation}
  Rearranging the sum by formula-vertices, instead of arranging it by
  inference-vertices, we get \begin{equation} \sum_{u \in J}
    (1-Z_u)\left({\sum_{w \in N^+(u)} F(w) - \sum_{w \in N^-(u)}
        F(w)}\right) \geq 0. \label{eqn:second} \end{equation} The
  expression enclosed in parenthesis in the left-hand side
  in~\eqref{eqn:second} equals~$-B(u)$, where~$B(u)$ is the balance
  of~$u$ in~$G(\Pi,F)$. Now,~$Z_u = 1$ whenever~$u$ is a
  source,~$Z_s = 0$ for~$s$ by assumption, and~$B(u)(1-Z_u) \geq 0$
  for every other formula-vertex~$u \in J$ by the definition of
  circular proof. Hence \begin{equation} -B(s) \geq
    0, \label{eqn:third} \end{equation} which shows that~$s$ has
  non-positive balance in~$G(\Pi,F)$ and is thus not a
  sink.  \end{proof}

In the second proof of soundness, one can think of the~$Z_u$ as
variables that are constrained by: 1) the inequalities that express
the local soundness of the three types of inference rules, 2) the
equations~$Z_u = 1$ for~$u$ a source, that express that each
hypothesis is satisfied, and 3) the equation~$Z_s = 0$ for~$s$ a sink,
that expresses that some conclusion is falsified. In other words, the
constraints of types 1), 2) and 3) express that the proof is unsound,
while equations~\eqref{eqn:first} and~\eqref{eqn:second} say that any
valid flow assignment~$F(w)$ can play the role of a witness of its
infeasibility.

\subsection{Soundness for Other Sets of Rules}

We claim that both proofs of soundness that we gave apply without
change to any set of sound inference rules that have a single
conclusion formula.  This requirement is fulfilled by all sets of
standard inference rules, such as~\eqref{eqn:Fregerules}, and is
subsumed by the following more general but technical one:
\begin{quotation}
  \noindent (*) Any inference rule in~$\mathscr{R}$ that has more than
  one conclusion formula has the property that any truth assignment
  that falsifies one of its conclusion formulas must satisfy all other
  conclusion formulas.
\end{quotation}
Obviously, if all rules in $\mathscr{R}$ have a single conclusion, then
(*) is satisfied. Note also that the only rule that has more than one
conclusion formula among axiom, symmetric cut, and split is split, and
clearly it has the required property. Thus, the following statement
generalizes Theorem~\ref{thm:soundness}.

\begin{theorem} \label{thm:soundnessgeneral} Let~$\mathscr{R}$ be a
  set of sound inference rules that satisfy property
  (*). Let~$\mathscr{H}$ be a set of hypothesis formulas and let~$A$
  be a goal formula. If there is a circular proof of~$A$
  from~$\mathscr{H}$ through the rules in~$\mathscr{R}$, then every
  truth assignment that satisfies every formula in~$\mathscr{H}$ also
  satisfies~$A$.
\end{theorem}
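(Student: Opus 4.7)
The plan is to observe that both proofs of Theorem~\ref{thm:soundness} generalize, and to point out precisely where each one uses a property of the split rule that is the instance of~(*) for sets with multi-consequent rules. The single-consequent case is entirely subsumed, so throughout I only need to track what happens when an inference-vertex $w$ has $|N^+(w)| \geq 2$.

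For the first (combinatorial) proof, I would replay the induction on the total flow-sum exactly as before: given a sink $s$ whose labelling formula is falsified by $\alpha$, pick an in-neighbour $r$ and an in-neighbour $u$ of $r$ whose antecedent formula is falsified by $\alpha$ (such $u$ exists by soundness of the rule at~$r$), then define the same $\Pi'$ and $F'$ according to whether $F(r) > B(s)$ or $F(r) \leq B(s)$. The only step that truly used the identity of the rule was the final contradiction, where we argued that if another out-neighbour $t$ of~$r$ were a source of $G(\Pi',F')$, then since $\alpha$ falsifies the consequent at~$s$ and $r$ was a split, $\alpha$ would satisfy the consequent at~$t$. In the general setting I replace this single sentence by an appeal to~(*): since $\alpha$ falsifies the consequent at~$s$, property~(*) forces $\alpha$ to satisfy every other consequent of the rule at~$r$, and in particular the one at~$t$, which is the same contradiction. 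Properties~1--4 of $\Pi'$ and $F'$ are verified verbatim.

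For the second (semi-algebraic) proof, the task is to produce, for each inference-vertex $w$ labelled by a rule $R$ with in- and out-neighbours $N^-(w)$ and $N^+(w)$, the inequality
\begin{equation*}
\sum_{v \in N^-(w)} (1-Z_v) - \sum_{v \in N^+(w)} (1-Z_v) \geq 0.
\end{equation*}
Once this is in hand, the rest of the argument (multiply by $F(w)$, sum over $w$, rearrange by formula-vertex, and use that sources are satisfied while $s$ is falsified) is rule-independent and carries over without any change. To derive the inequality: if every consequent of $R$ is satisfied by $\alpha$, the right-hand sum is $0$ and the inequality is trivial; otherwise some consequent is falsified, so by~(*) it is the \emph{unique} falsified consequent, making the right-hand sum exactly $1$, and by soundness of $R$ at least one antecedent must be falsified too, so the left-hand sum is at least~$1$. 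Either way the inequality holds.

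I do not foresee a genuine obstacle: property~(*) was tailor-made to be exactly what these two arguments require at multi-consequent rules, and both proofs use it at one isolated point. The only minor care is that in the combinatorial proof one must pick $u$ among the antecedents \emph{of the rule at~$r$} rather than of some single-consequent idealization of it, but soundness of the rule already guarantees such a falsified antecedent exists.
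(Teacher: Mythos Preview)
Your proposal is correct and matches the paper's own proof essentially line for line: the paper likewise notes that the first (combinatorial) proof goes through verbatim once the split-specific step is replaced by an appeal to~(*), and it generalizes the second (semi-algebraic) proof by establishing exactly the inequality you wrote via the same case split (all consequents satisfied versus one falsified, with~(*) forcing uniqueness and soundness giving a falsified antecedent). There is nothing to add.
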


\begin{proof}
  The first proof of Theorem~\ref{thm:soundness} was already phrased
  in a way that the generalization to sets of inference rules that
  satisfy (*) is straightforward. We discuss the generalization of the
  second proof. Let~$\Pi$ be a circular proof with rules
  in~$\mathscr{R}$, let~$A_u$ be the formula that labels the
  formula-vertex~$u$, let~$Z_u := \alpha(A_u)$ be the truth value
  given to~$A_u$ by a truth assignment~$\alpha$, and let~$w$ be an
  inference-vertex of~$\Pi$ with in- and out-neighbors~$N^-$
  and~$N^+$, respectively.  Then, the following inequality holds:
\begin{equation}
\sum_{u \in N^-} (1-Z_u) - \sum_{u \in N^+} (1-Z_u) \geq 0
\label{eqn:generalizedequation}
\end{equation}
Indeed, if~$Z_a = 0$ for some~$a \in N^+$, then by the soundness of
the rule there exists~$b \in N^-$ such that~$Z_b = 0$, and by (*) we
also have~$Z_c = 1$ for every~$c \in N^+\setminus\{a\}$. The
conclusion to this is that the left summand
in~\eqref{eqn:generalizedequation} is at least~$1$, and the right
summand in~\eqref{eqn:generalizedequation} is exactly~$1$, so their
difference is non-negative.  {From} here it suffices to note that this
is the only property we used in order to derive
equations~\eqref{eqn:first},~\eqref{eqn:second} and~\eqref{eqn:third}.
\end{proof}

\commentout{

\subsection{Old}

Let $\Gamma$ be a set of propositional formulas. A
\introduceterm{circular $\Gamma$-proof} is a labeled bipartite
directed graph whose edges join \introduceterm{rule vertices}, labeled
by positive integers, and \introduceterm{formula vertices}, labeled by
formulas that are disjunctions of formulas in $\Gamma$, in such a way
that the closed neighborhood of every rule vertex is a copy of one of
the two allowed rules: the \introduceterm{cut rule} and the
\introduceterm{split rule} (see Figures~\ref{fig:cut}
and~\ref{fig:split}).

\begin{figure}[h]
  \begin{subfigure}[t]{0.5\textwidth} \centering \begin{tikzpicture} \path[draw=none,use
    as bounding box] (-3,-2.5) rectangle (3,2.5);
      \node[clause] (left)  at (-2,1.5) {$F \vee \bar{A}$};
      \node[clause] (right) at (+2,1.5) {$F \vee A$};
      \node[rule]   (rule)  at ( 0,0) {$n$};
      \node[clause] (conclusion) at (0,-1.5) {$F$};
      \draw[edge]   (rule)  to (conclusion);
      \draw[edge]   (left)  to (rule);
      \draw[edge]   (right) to (rule);
    \end{tikzpicture}
    \caption{Cut rule.}
    \label{fig:cut}
  \end{subfigure}
  \begin{subfigure}[t]{0.5\textwidth} \centering \begin{tikzpicture} \path[draw=none,use
    as bounding box] (-3,-2.5) rectangle (3,2.5);
      \node[clause] (left)  at (-2,-1.5) {$F \vee \bar{A}$};
      \node[clause] (right) at (+2,-1.5) {$F \vee A$};
      \node[rule]   (rule)  at ( 0,0) {$n$};
      \node[clause] (start) at (0,+1.5) {$F$};
      \draw[edge]   (start) to (rule);
      \draw[edge]   (rule)  to (left);
      \draw[edge]   (rule)  to (right);
    \end{tikzpicture}
    \caption{Split rule.}
    \label{fig:split}
  \end{subfigure}
  \caption{Rules for circular proofs.}
  \label{fig:circularproofs}
\end{figure}

The bipartite graph in a circular proof is called its
\introduceterm{proof-graph}, and here bipartite means that every in-
or out-neighbour of a formula vertex is a rule vertex, and
vice-versa. The positive integer labeling a rule vertex is called its
\introduceterm{flow}, and the formula labeling a formula vertex is
called its \introduceterm{labeling formula}.  The
\introduceterm{inflow} of a formula vertex is the sum of the flows of
its in-neighbours. Similarly, the \introduceterm{outflow} of a formula
vertex is the sum of the flows of its out-neighbours. The
\introduceterm{balance} of a formula vertex is the difference between
its inflow and its outflow. The formulas that label vertices of
strictly negative balance are called \introduceterm{sources} and those
that label vertices of strictly positive balance are called
\introduceterm{sinks}.

Let $\mathcal{H} \cup \{F\}$ be a set of formulas and let $\Pi$ be a circular
$\Gamma$-proof. We say that $\Pi$ is a \introduceterm{circular
  $\Gamma$-proof of $F$ from $\mathcal{H}$}  if all sources of $\Pi$ are either
logical axioms or formulas from $\mathcal{H}$, and the formula $F$ appears among
the sinks of $\Pi$. We say that $\Pi$ is a circular
$\Gamma$-\introduceterm{refutation} of $\mathcal{H}$ if it is a circular
$\Gamma$-proof of the empty (i.e.\ false) formula from $\mathcal{H}$. The
\introduceterm{length} of $\Pi$ is the number of rule vertices in the
proof-graph, and its \introduceterm{size} is the sum of the sizes of
the labeling formulas and the bit-lengths of the flows.

A \introduceterm{circular Frege proof} is a circular $\Gamma$-proof
where $\Gamma$ is the set of all propositional formulas. A
\introduceterm{circular resolution proof} is a circular $\Gamma$-proof
where $\Gamma$ is the set of all literals. See
Figure~\ref{fig:circularresolution}.

\begin{figure}[h]
  \begin{subfigure}[t]{0.5\textwidth} \centering \begin{tikzpicture} \path[draw=none,use
    as bounding box] (-3,-2.5) rectangle (3,2.5);
      \node[clause] (left)  at (-2,1.5) {$C \vee \bar{x}$};
      \node[clause] (right) at (+2,1.5) {$C \vee x$};
      \node[rule]   (rule)  at ( 0,0) {$n$};
      \node[clause] (conclusion) at (0,-1.5) {$C$};
      \draw[edge]   (rule)  to (conclusion);
      \draw[edge]   (left)  to (rule);
      \draw[edge]   (right) to (rule);
    \end{tikzpicture}
    \caption{Resolution rule.}
    \label{fig:rescut}
  \end{subfigure}
  \begin{subfigure}[t]{0.5\textwidth} \centering \begin{tikzpicture} \path[draw=none,use
    as bounding box] (-3,-2.5) rectangle (3,2.5);
      \node[clause] (left)  at (-2,-1.5) {$C \vee \bar{x}$};
      \node[clause] (right) at (+2,-1.5) {$C \vee x$};
      \node[rule]   (rule)  at ( 0,0) {$n$};
      \node[clause] (start) at (0,+1.5) {$C$};
      \draw[edge]   (start) to (rule);
      \draw[edge]   (rule)  to (left);
      \draw[edge]   (rule)  to (right);
    \end{tikzpicture}
    \caption{Split rule.}
    \label{fig:ressplit}
  \end{subfigure}
  \caption{Rules for circular resolution.}
  \label{fig:circularresolution}
\end{figure}

The definition of circular proofs allows for some simplifications and
further normalizations of the proof-graph without loss of
generality. In particular we can always assume that there are no two
formula vertices labeled with the same formula\commentAA{I'm not
  terribly fond of this because then tree-like proofs cannot be just
  put as the special case in which the proof-graph is a tree; well,
  sure they can, but then it is not true that ``we can always
  assume...''.}, therefore we can abuse terminology and talk about the
balance of a formula\commentAA{Accordingly, I tried to avoid this
  terminology by referring to nodes of the proof graph and explictly
  referring to its labeling formula instead of referring to formulas
  directly; I'm not sure I always managed, but I think it is doable.}.
In general we could also assume that the graph is connected, and that
it contains at least one conclusion. Nevertheless in the soundness
proof it may be convenient to allow degenerate proofs where the there
are multiple connected components, some of which without any
conclusion (e.g., an isolated formula vertex).\commentML{I like
  degenerate proofs to make sense anyway.}

}

\section{Circular Resolution} \label{sec:resolution} 

In this section we investigate the power of Circular
Resolution. Recall from the discussion in
Section~\ref{sec:symmetricrules} that Resolution is traditionally
defined to have cut as its only rule, but that an essentially
equivalent version of it is obtained if we define it through symmetric
cut, split, and axiom, still all restricted to clauses. This more
liberal definition of Resolution, while staying equivalent vis-a-vis
the tree-like and dag-like versions of Resolution, will play an
important role for the circular version of Resolution.

While for Frege proof systems we will prove later that there is no qualitative
difference between tree-like, dag-like, and circular proofs, in this
section we show that circular Resolution can be exponentially stronger
than dag-like Resolution. Indeed, we show that Circular Resolution is
polynomially equivalent with the Sherali-Adams proof system, which is
already known to be stronger than dag-like Resolution:

\begin{theorem} \label{thm:circularresolutionvsSA} Sherali-Adams and
  Circular Resolution polynomially simulate each other.  Moreover, the
  simulation one way converts monomial size~$s$ and degree~$d$ into
  size~$O(s)$ and width~$d$, and the simulation in the reverse way
  converts size~$s$ and width~$w$ into monomial size~$O(s)$ and
  degree~$w$.
\end{theorem}

For the statement of Theorem~\ref{thm:circularresolutionvsSA} to even
make sense, Sherali-Adams is to be understood as a proof system for
deriving clauses from clauses, under an appropriate encoding of
clauses as polynomial inequalities, to be discussed later in this
section.

\commentout{

We discuss the restriction of circular Frege which only deals with
clauses, which is essentially a circular version of resolution.

A \introduceterm{resolution derivation} of a clause $C$ from a CNF
formula $F=\bigwedge_{j=1}^{m} C_{j}$ over variables
$x_{1}, \ldots, x_{n}$ is a sequence of clauses
$(D{1}, \ldots, D_{s})$ such that $D_{s}=C$ and such that each $D_{t}$
in the sequence is either
\begin{itemize}
  \item (axioms) a clauses $C_{j}$ for $j\in[m]$; or
  \item (logical axioms) a clause $ x_{i} \lor \overline{x_{i}} $ for some $i \in [n]$; or
  \item (resolution) $D_{t} = A \lor B$ where there are
  $D_{t'} = A \lor x_{i} $ and $D_{t''}= B \lor \overline{x_{i}}$
  for some $0 < t',t'' < t$ and $i \in [n]$;
  \item (weakening) $D_{t} = D_{t'} \lor B$ for some $0 < t' < t$.
\end{itemize}
The size of such proof is $s$ and the width of such proof is
$\max_{t=1}^{s}|D_{t}|$. A \introduceterm{resolution refutation} of
$F$ is a resolution proof of the empty clause from $F$.\footnote{If we
  only care about resolution refutation then we do not need logical
  axioms or weakening, but these are necessary to guarantee
  completeness over clauses.}

Recall that circular resolution proof from a CNF formula
$F=\bigwedge_{j=1}^{m} C_{j}$ is a circular Frege proof of $C$ from
$\{C_{j}\}_{j\in [m]} \cup \{x_{i} \lor \overline{x_{i}}\}_{i \in
  [n]}$ where all formula vertex are labeled by clauses.
Notice that circular resolution does not have a weakening rule as
resolution, but this is not an issue since the split rule can be used
to simulate it efficiently. 
\begin{claim}[Weakening]
  \label{clm:weakening}
  For every $n>0$ there exists a circular resolution proof of size
  $|A \lor B|$ where
  \begin{itemize}
    \item clauses $A$ has balance $-n$;
    \item clauses $A \vee B$ has balance $n$;
    \item all other clauses in the proof have non negative balance.
  \end{itemize}
\end{claim}
\begin{proof}
  The proof is just a sequence of split rules. Without loss of
  generality assume that $B/A=b_{1} \lor b_{2} \lor \cdots \lor b_{k}$
  and that $A$ and $B$ do not contain opposite literals.
  From $A$ we apply the split rule to get $A \lor b_{1}$ and
  $A \lor \overline{b_{1}}$, we ignore the latter and from
  $A \lor b_{1}$ we apply the split rule to get
  $A \lor b_{1} \lor b_{2}$ and $A \lor b_{1} \lor \overline{b_{2}}$.
  We proceed in the same way until we get to $A \lor b_{1} \lor b_{2}
  \lor \cdots \lor b_{k}$ which is $A \lor B$.
  For all the split rules the flow is set to be equal to $n$, so that
  the claims about the balance are satisfied.
\end{proof}

The following corollary then follows immediately.
\begin{corollary} 
  For any resolution proof of $C$ from $F$ of size $s$ and width $w$
  there exists a circular resolution proof of $C$ from $F$ of size
  $O(w \cdot s)$ and width $w+1$.
\end{corollary}
\begin{proof}[Proof sketch]
  All applications of the resolution rule that get clause $A \lor B$
  form $A \lor x$ and $B \lor \overline{x}$ can be turned into the
  forms that get clauses $A \lor B$ form $A \lor B \lor x$ and
  $A \lor B \lor \overline{x}$ using weakening and increasing the
  width of the proof to $w+1$. Now we turn this resolution proof into
  a circular resolution proof just by converting every resolution step
  into the corresponding rule application from Figure~\ref{fig:rescut}.
  What remains it to simulate the weakening rules, which can be done
  using Claim~\ref{clm:weakening}.
  The only missing step is set the flow on the resolution rules. To do
  that start from the bottom setting the flow of the last rule vertex
  to $1$ and there, since the proof graph is a DAG, set the numbers in
  reverse topological order so that the balance constraints on non
  axiom clauses is non negative. 
\end{proof}

}

\subsection{Pigeonhole Principles} \label{sec:exponentialseparation}

\newcommand{\PHP}{\mathrm{PHP}}

We start by showing that the Pigeonhole Principle
formula~$\PHP^{n+1}_n$ has small Circular Resolution proofs. By the
well-known lower bound of Haken \cite{Haken1985}, this will show that
Circular Resolution is exponentially stronger than Resolution.
However, we show the stronger claim that Circular Resolution is also
stronger than Resolution when measured in terms of \emph{width}; i.e.,
the length of the longest clause in the proof.  To prove this, we need
to introduce the \emph{bipartite graph-based} variant of the
Pigeonhole Principle from~\cite{BenSassonWigderson2001}.
 
Let~$G$ be a bipartite graph with vertex bipartition~$(U,V)$, and set
of edges~$E \subseteq U \times V$. For a vertex~$w \in U \cup V$, we
write~$N_G(w)$ to denote the set of neighbours of~$w$ in~$G$,
and~$\deg_G(w)$ to denote its degree. The Graph Pigeonhole Principle
of~$G$, denoted by~$G$-$\PHP$, is a CNF formula that has one
variable~$X_{u,v}$ for each edge~$(u,v)$ in~$E$ and the following set
clauses:
\smallskip

\begin{center}
\begin{tabular}{ll}
$X_{u,v_1} \vee \cdots \vee X_{u,v_d}$ & for $u \in U$ with
  $N_G(u) = \{v_1,\ldots,v_d\}$, \\ 
$\overline{X_{u_1,v}} \vee
  \overline{X_{u_2,v}}$ & for $u_1,u_2 \in U$, $v \in V$
  with $u_1 \not= u_2$, and $v \in N_G(u_1) \cap N_G(u_2)$.
\end{tabular}
\end{center}
\smallskip

If~$|U| > |V|$, and in particular if~$|U|=n+1$ and~$|V|=n$,
then~$G$-$\PHP$ is unsatisfiable by the pigeonhole principle.
For~$G = K_{n+1,n}$, the complete bipartite graph with sides of
sizes~$n+1$ and~$n$, the formula~$G$-$\PHP$ is the standard CNF
encoding~$\PHP^{n+1}_n$ of the pigeonhole principle.

Even for certain constant degree bipartite graphs with~$|U|=n+1$
and~$|V|=n$, the formulas are hard for Resolution.

\begin{theorem}[\cite{BenSassonWigderson2001,Haken1985}] \label{thm:bswandhaken}
  There are families of bipartite graphs~$(G_{n})_{n \geq 1}$,
  where~$G_n$ has maximum degree bounded by a constant and vertex
  bipartition~$(U,V)$ of~$G_n$ that satisfies~$|U| = n+1$
  and~$|V| = n$, such that every Resolution refutation of~$G_{n}$-$\PHP$
  has width~$\Omega(n)$ and length~$2^{\Omega(n)}$. Moreover, this
  implies that every Resolution refutation of~$\PHP^{n+1}_n$ has
  length~$2^{\Omega(n)}$.
\end{theorem}

In contrast, we show that these formulas have Circular Resolution
refutations of polynomial length and, simultaneously, constant width.

\begin{theorem} \label{thm:upperbound} For every bipartite graph~$G$
  of maximum degree~$d$ with bipartition~$(U,V)$ such that~$|U|>|V|$,
  there is a Circular Resolution refutation of~$G$-$\PHP$ of length
  polynomial in~$|U|+|V|$ and width~$d$. 
\end{theorem}
\begin{proof}
  We build the graph of the refutation in parts.
  Concretely, for every~$u \in U$ and~$v \in V$, we describe two
  Circular Resolution proofs~$\Pi_{u\rightarrow}$
  and~$\Pi_{\rightarrow v}$, with their associated flow assignments.
  These proofs will have width bounded by~$\deg_G(u)$ and~$\deg_G(v)$,
  respectively, and size polynomial in~$\deg_G(u)$ and~$\deg_G(v)$,
  respectively. Moreover, the following properties will be ensured:
\begin{enumerate} \itemsep=0pt
\item The proof-graph of~$\Pi_{u\rightarrow}$ contains a
  formula-vertex labelled by the empty clause~$\emptyclause$ with
  balance~$+1$ and a formula-vertex labelled~$\overline{X_{u,v}}$ with
  balance~$-1$ for every~$v \in N_G(u)$; any other formula-vertex that
  has negative balance is labelled by a clause of~$G$-$\PHP$.
\item The proof-graph~$\Pi_{\rightarrow v}$ contains a formula-vertex
  labelled by the empty clause~$\emptyclause$ with balance~$-1$ and a
  formula-vertex labelled by~$\overline{X_{u,v}}$ with balance~$+1$
  for every~$u \in N_G(v)$; any other formula-vertex that has negative
  balance is labelled by a clause of~$G$-$\PHP$.
\end{enumerate}
By merging~$n$ of the~$n+1$ many formula-vertices labelled by the
empty clause in proofs of the form~$\Pi_{u \rightarrow}$ with the~$n$
many formula-vertices labelled by the empty clause in proofs of the
form~$\Pi_{\rightarrow v}$ we get a proof in which all the
formula-vertices that have negative balance are clauses of~$G$-$\PHP$,
and the empty clause~$\emptyclause$ has positive balance. This is
indeed a Circular Resolution refutation of~$G$-$\PHP$. See
Figure~\ref{fig:proofofphp43} for a diagram of the proof
for~$\PHP^4_3$.

\begin{figure}
  \begin{center}
  \begin{tikzpicture}
  \draw[help lines, white] (0,0) grid (11,9); 

  \node at (5.5,8) {$\overline{X_{11}}$};
  \node at (5.5,7.5) {$\overline{X_{12}}$};
  \node at (5.5,7) {$\overline{X_{13}}$};

  \node at (5.5,6) {$\overline{X_{21}}$};
  \node at (5.5,5.5) {$\overline{X_{22}}$};
  \node at (5.5,5) {$\overline{X_{23}}$};

  \node at (5.5,4) {$\overline{X_{31}}$};
  \node at (5.5,3.5) {$\overline{X_{32}}$};
  \node at (5.5,3) {$\overline{X_{33}}$};

  \node at (5.5,2) {$\overline{X_{41}}$};
  \node at (5.5,1.5) {$\overline{X_{42}}$};
  \node at (5.5,1) {$\overline{X_{43}}$};

  \node at (9.2,8) {$0$};
  \node at (9.2,6) {$0$};
  \node at (9.2,4) {$0$};
  \node at (9.2,2) {$0$};

  \draw[line width = 1pt, ->] plot [smooth] coordinates { (9.5,8) (9.6,8.4) (1.9,8.4) (2,8) };
  \draw[line width = 1pt, ->] plot [smooth] coordinates { (9.5,6) (9.6,6.4) (1.9,6.4) (2,6) };
  \draw[line width = 1pt, ->] plot [smooth] coordinates { (9.5,4) (9.6,4.4) (1.9,4.4) (2,4) };

  \draw[line width = 1pt, black,->] (6.2, 8.0) -- (8.0, 8.0);
  \draw[line width = 1pt, black,->] (6.2, 7.5) -- (8.0, 7.9);
  \draw[line width = 1pt, black,->] (6.2, 7.0) -- (8.0, 7.85);
  \draw[line width = 1pt, black] (8.25,8) circle (5pt);
  \draw[line width = 1pt, black] (8.25,8) circle (3pt);
  \draw[line width = 1pt, black, ->] (8.5,8) -- (8.8,8);

  \draw[line width = 1pt, black,->] (6.2, 6.0) -- (8.0, 6.0);
  \draw[line width = 1pt, black,->] (6.2, 5.5) -- (8.0, 5.9);
  \draw[line width = 1pt, black,->] (6.2, 5.0) -- (8.0, 5.85);
  \draw[line width = 1pt, black] (8.25, 6) circle (5pt);
  \draw[line width = 1pt, black] (8.25, 6) circle (3pt);
  \draw[line width = 1pt, black, ->] (8.5, 6) -- (8.8, 6);

  \draw[line width = 1pt, black,->] (6.2, 4.0) -- (8.0, 4.0);
  \draw[line width = 1pt, black,->] (6.2, 3.5) -- (8.0, 3.9);
  \draw[line width = 1pt, black,->] (6.2, 3.0) -- (8.0, 3.85);
  \draw[line width = 1pt, black] (8.25, 4) circle (5pt);
  \draw[line width = 1pt, black] (8.25, 4) circle (3pt);
  \draw[line width = 1pt, black, ->] (8.5, 4) -- (8.8, 4);

  \draw[line width = 1pt, black,->] (6.2, 2.0) -- (8.0, 2.0);
  \draw[line width = 1pt, black,->] (6.2, 1.5) -- (8.0, 1.9);
  \draw[line width = 1pt, black,->] (6.2, 1.0) -- (8.0, 1.85);
  \draw[line width = 1pt, black] (8.25, 2) circle (5pt);
  \draw[line width = 1pt, black] (8.25, 2) circle (3pt);
  \draw[line width = 1pt, black, ->] (8.5, 2) -- (8.8, 2);

  \draw[line width = 1pt, black] (2.3, 8) circle (5pt);
  \draw[line width = 1pt, black] (2.3, 8) circle (3pt);
  \draw[line width = 1pt, black, ->] (2.75, 8) -- (5, 8);
  \draw[line width = 1pt, black, ->] (2.75, 7.9) -- (5, 6);
  \draw[line width = 1pt, black, ->] (2.75, 7.8) -- (5, 4);
  \draw[line width = 1pt, black, ->] (2.75, 7.7) -- (5, 2);

  \draw[line width = 1pt, black] (2.3, 6) circle (5pt);
  \draw[line width = 1pt, black] (2.3, 6) circle (3pt);
  \draw[line width = 1pt, black, ->] (2.75, 6.1) -- (5, 7.5);
  \draw[line width = 1pt, black, ->] (2.75, 6.0) -- (5, 5.5);
  \draw[line width = 1pt, black, ->] (2.75, 5.9) -- (5, 3.5);
  \draw[line width = 1pt, black, ->] (2.75, 5.8) -- (5, 1.5);

  \draw[line width = 1pt, black] (2.3, 4) circle (5pt);
  \draw[line width = 1pt, black] (2.3, 4) circle (3pt);
  \draw[line width = 1pt, black, ->] (2.75, 4.2) -- (5, 7);
  \draw[line width = 1pt, black, ->] (2.75, 4.1) -- (5, 5);
  \draw[line width = 1pt, black, ->] (2.75, 4.0) -- (5, 3);
  \draw[line width = 1pt, black, ->] (2.75, 3.9) -- (5, 1);

  \end{tikzpicture} \end{center} \caption{The diagram of the circular
  proof of $\PHP^4_3$. The double circles indicate multiple
  inferences. The empty clause $0$ is derived four times and used only
  three times.} \label{fig:proofofphp43}
\end{figure}

For the construction of~$\Pi_{u\rightarrow}$, rename
the neighbours of~$u$ as~$1,2,\ldots,\ell$.
Let~$C_j$ denote the clause~$X_{u,1} \vee \cdots \vee X_{u,j}$ and
note that~$C_\ell$ is a clause of~$G$-$\PHP$.
Split~$\overline{X_{u,\ell}}$ on~$X_{u,1}$, then
split~$\overline{X_{u,\ell}} \vee X_{u,1}$ on~$X_{u,2}$, then
split~$\overline{X_{u,\ell}} \vee X_{u,1} \vee X_{u,2}$ on~$X_{u,3}$,
and so on until we produce~$\overline{X_{u,\ell}} \vee C_{\ell-1}$.
Then resolve this clause with~$C_\ell$ to produce~$C_{\ell-1}$. As a
sequence, this part of~$\Pi_{u\rightarrow}$ looks as follows:
\begin{subequations}
\label{eqn:phpdeduced}
\begin{align}
 & \overline{X_{u,\ell}}  \\
 & \overline{X_{u,\ell}} \vee X_{u,1} \\
 & \overline{X_{u,\ell}} \vee X_{u,1} \vee X_{u,2} \\
 &  \vdots\\
 & \overline{X_{u,\ell}} \vee X_{u,1} \vee X_{u,2} \vee \ldots \vee X_{u,{\ell-1}} \\
 & X_{u,1} \vee X_{u,2} \vee \ldots \vee X_{u,\ell-1}.
\end{align}
\end{subequations}
We now repeat essentially the same construction: we start
with~$\overline{X_{u,\ell-1}}$ and we split on~$X_{u,1}$,~$X_{u,2}$,
\ldots, as before until we
produce~$\overline{X_{u,\ell-1}} \vee C_{\ell-2}$. Cutting the latter
with the previously deduced~$C_{\ell-1}$ gives~$C_{\ell-2}$.
We continue other~$\ell-2$ times in order to get down to the empty
clause.
Observe that we can set the flow of all splits and cuts in this proof
to~$+1$ to get the balance claimed above.

For the construction of~$\Pi_{\rightarrow v}$ we need some more work.
Again rename the neighbours of~$v$ as~$1,2,\ldots,\ell$. All the
inference steps in this construction will have assigned flow~$+1$.
The first step in building the proof~$\Pi_{\rightarrow v}$ is the
derivation of the following sequence of clauses:
\begin{subequations}
\label{eqn:phpdeduced}
\begin{align}
 & \overline{X_{1,v}} \label{eqn:phpdeducedF} \\
 & X_{1,v} \vee \overline{X_{2,v}}\\
 & X_{1,v} \vee X_{2,v} \vee \overline{X_{3,v}}\\
 &  \vdots\\
 & X_{1,v} \vee X_{2,v}\vee \ldots \vee \overline{X_{\ell-1,v}}\\
 & X_{1,v} \vee X_{2,v}\vee \ldots \vee X_{\ell-1,v}. \label{eqn:phpdeducedL}
\end{align}
\end{subequations}
Notice that~\eqref{eqn:phpdeducedL} does not follow the pattern of the
previous clauses.
We start these inferences by splitting the empty clause on
variable~$X_{1,v}$, to get clauses~$\overline{X_{1,v}}$
and~$X_{1,v}$. We split the latter on~$X_{2,v}$ to
get~$X_{1,v} \vee \overline{X_{2,v}}$ and~$X_{1,v} \vee
X_{2,v}$. For~$i=3,\ldots,\ell-1$ we keep
splitting~$X_{1,v} \vee \ldots \vee X_{i-1,v}$ on~$X_{i,v}$ to get the
clauses~$X_{1,v} \vee \ldots \vee X_{i-1,v} \vee \overline{X_{i,v}}$
and~$X_{1,v} \vee \ldots \vee X_{i-1,v} \vee X_{i,v}$.
The empty clause has balance~$-1$ and all clauses
(\ref{eqn:phpdeducedF}--\ref{eqn:phpdeducedL}) have balance~$+1$.

Now to complete the proof~$\Pi_{\rightarrow v}$ we deduce the~$\ell$
singleton clauses~$\overline{X_{1,v}}, \ldots, \overline{X_{\ell,v}}$
from the clauses (\ref{eqn:phpdeducedF}--\ref{eqn:phpdeducedL}).
Observe that a simple sequence of cut rules between a
clause~$X_{1,v} \vee \ldots \vee X_{i-1,v} \vee \overline{X_{i,v}}$
and clauses~$\overline{X_{j,v}} \vee \overline{X_{i,v}}$
for~$1 \leq j < i$ produces the singleton clause~$\overline{X_{i,v}}$,
furthermore its balance is~$+1$.
We do a similar sequence of steps
using~$X_{1,v} \vee \ldots \vee X_{\ell-2,v} \vee X_{\ell-1,v}$ and
clauses~$\overline{X_{j,v}} \vee \overline{X_{\ell,v}}$
for~$1 \leq j < \ell$ to produce the clause~$\overline{X_{\ell,v}}$
with balance~$+1$, and conclude the construction
of~$\Pi_{\rightarrow v}$.
\end{proof}

\commentout{
\begin{figure}
\begin{center}
\begin{tikzpicture} \draw[help lines, gray, use as bounding box] (0,0) grid
 (12,3);
\node[formulavertexnobox] (a1) at (0.8,1.5) {$\scriptstyle\overline{X_{u,\ell}}$};
\node[formulavertexnobox] (a2) at (3,1.5) {$\scriptstyle \overline{X_{u,\ell}} \vee X_{u,1}$};
\node[formulavertexnobox] (adots) at (5.2,1.5) {$\scriptstyle \cdots$};
\node[formulavertexnobox] (al) at (6.1,1.5) {$\scriptstyle \overline{X_{u,\ell}} \vee C_{\ell-1}$};

\node[formulavertexnobox] (a1p) at (1.8,0.5) {$\scriptstyle\overline{X_{u,\ell-1}}$};
\node[formulavertexnobox] (a2p) at (4.35,0.5) {$\scriptstyle \overline{X_{u,\ell-1}} \vee X_{u,1}$};
\node[formulavertexnobox] (adotsp) at (6.7,0.5) {$\scriptstyle \cdots$};
\node[formulavertexnobox] (alp) at (7.81,0.5) {$\scriptstyle \overline{X_{u,\ell-1}} \vee C_{\ell-2}$};

\node[formulavertexnobox] (cl) at (6,2.5) {$\scriptstyle C_\ell$};
\node[formulavertexnobox] (clm1) at (8,2.5) {$\scriptstyle C_{\ell-1}$};
\node[formulavertexnobox] (clm2) at (10,2.5) {$\scriptstyle C_{\ell-1}$};
\node[formulavertexnobox] (cdots) at (11.1,2.5) {$\scriptstyle\cdots$};
\node[formulavertexnobox] (empty) at (11.5,2.5) {$\scriptstyle\emptyclause$};

\node[inferencevertexsmall] (r1) at (1.65,1.5) {};
\node[inferencevertexsmall] (r1i) at (4.35,1.5) {};
\node[inferencevertexsmall] (r1p) at (2.85,0.5) {};
\node[inferencevertexsmall] (r1ip) at (5.85,0.5) {};
\node[inferencevertexsmall] (r2) at (7,2.5) {};
\node[inferencevertexsmall] (r3) at (9,2.5) {};

\draw[edge] (a1) to (r1);
\draw[edge] (r1) to (a2);
\draw[edge] (al) to (r2);
\draw[edge] (cl) to (r2);
\draw[edge] (r2) to (clm1);
\draw[edge] (r1i) to (adots);
\draw[edge] (r1ip) to (adotsp);
\draw[edge] (clm1) to (r3);
\draw[edge] (alp) to (r3);
\draw[edge] (r3) to (clm2);
\draw[edge] (clm2) to (cdots);
\draw[edge] (a2) to (r1i);
\draw[edge] (a2p) to (r1ip);
\draw[edge] (a1p) to (r1p);
\draw[edge] (r1p) to (a2p);
\end{tikzpicture}
\end{center}
\caption{The proof $\Pi_{u \to}$ for $u \in U$. 
The conclusions of the split rules that are not needed are not drawn.
Every inference-vertex is assigned flow $1$.}
\label{fig:pigeonproof}
\end{figure}

\begin{figure}
\begin{center}
\begin{tikzpicture} \draw[help lines, gray, use as bounding box] (0,0) grid
 (16,3);
\node[formulavertex] (a1) at (1,1) {$A_1$};
\node[inferencevertex] (r1) at (1.5,1) {$R_1$};
\draw[edge] (a1) to (r1);
\end{tikzpicture}
\end{center}
\caption{The proof $\Pi_{u \to}$ for $u \in U$. All flows are $1$.}
\label{fig:holeproof}
\end{figure}
}

We get the following immediate consequence:

\begin{corollary} \label{cor:upperbound}
  The pigeonhole formulas~$\PHP^{n+1}_n$ have Circular Resolution
  refutations of polynomial length.
\end{corollary}

In combination with Theorem~\ref{thm:bswandhaken}, this shows that
Circular Resolution can be exponentially stronger than Resolution.  In
the next two sections we determine the exact power of Circular
Resolution.

\subsection{Simulation by Sherali-Adams} \label{sec:simulationby}

In this section we prove one half of
Theorem~\ref{thm:circularresolutionvsSA}. We need some preparation.
Fix a set of variables~$X_1,\ldots,X_n$ and their
twins~$\bar{X}_1,\ldots,\bar{X}_n$.  For a
clause~$C = \bigvee_{j \in Y} X_j \vee \bigvee_{j \in Z}
\overline{X_j}$,
define
\begin{equation}
T(C) :=  - \prod_{j \in Y} \bar{X}_j \prod_{j \in Z} X_j,
\label{eqn:multiplicative}
\end{equation}
Observe that a truth assignment satisfies~$C$ if and only if the
corresponding 0-1 assignment for the variables of~$T(C)$ makes the
inequality~$T(C) \geq 0$ true. There is an alternative encoding of
clauses into inequalities that is sometimes used. Define
\begin{equation}
L(C) := \sum_{j \in Y} X_j + \sum_{j \in Z} \bar{X}_j - 1,
\label{eqn:additive}
\end{equation}
and observe that a truth assignment satisfies~$C$ if and only if the
corresponding 0-1 assignment makes the inequality~$L(C) \geq 0$ true.
We state the results of this section for the~$T$-encoding of clauses,
but the same result would hold for the~$L$-encoding because there is
an efficient SA proof of~\eqref{eqn:multiplicative}
from~\eqref{eqn:additive} (see Lemma~4.2 in
\cite{AtseriasLauriaNordstrom2016}), and vice-versa.

We will use the following lemma, which is a variant of Lemma~4.4
in~\cite{AtseriasLauriaNordstrom2016}:

\begin{lemma} \label{lem:ALN} Let~$w \geq 2$ be an integer, let~$C$ be
  a clause with at most~$w$ literals, let~$D$ be a clause with at
  most~$w-1$ literals, and let~$X$ be a variable that does not appear
  in~$D$.  Then the following four inequalities have Sherali-Adams
  proofs (from nothing) of constant monomial size and
  degree~$w$: \begin{enumerate} \itemsep=0pt
  \item~$ T(X \vee \overline{X}) \geq 0$,
  \item~$ - T(D \vee \overline{X}) - T(D \vee X) + T(D) \geq
    0$, \item~$ - T(D) + T(D \vee \overline{X}) + T(D \vee X) \geq
    0$, \item~$ - T(C) \geq 0$.
\end{enumerate}
\end{lemma}

\begin{proof}
Let $D = \bigvee_{i \in Y} X_i \vee \bigvee_{j \in Z} X_j$
and $C = \bigvee_{i \in Y'} X_i \vee \bigvee_{j \in Z'} X_j$.
Then
\begin{enumerate} \itemsep=0pt
\item $T(X \vee \overline{X}) = (1-X-\bar{X}) \cdot X + (X^2-X)$,
\item  $- T(D \vee \overline{X}) -T(D \vee X) + T(D) =  
  (X+\bar{X}-1) \cdot \prod_{i \in Y} \bar{X}_i \prod_{j \in Z} X_j$, 
\item $- T(D) + T(D \vee \overline{X}) + T(D \vee X) =
  (1-X-\bar{X}) \cdot \prod_{i \in Y} \bar{X}_i \prod_{j \in
  Z} X_j$,
\item $-T(C) = 1\cdot \prod_{i \in Y'} \bar{X}_i \prod_{j \in Z'} X_j$.
\end{enumerate}
The claim on the monomial size and the degree follows.
\end{proof}

Now we are ready to state and prove the first half of
Theorem~\ref{thm:circularresolutionvsSA}.

\begin{lemma}\label{lmm:circulartosar}
  Let~$A_1,\ldots,A_m$ and~$A$ be %
  clauses. If there is a Circular Resolution proof of~$A$
  from~$A_1,\ldots,A_m$ of length~$s$ and width~$w$, then there is a
  Sherali-Adams proof of~$T(A)\geq 0$
  from~$T(A_1) \geq 0,\ldots,T(A_m) \geq 0$ of monomial size~$O(s)$
  and degree~$w$.
\end{lemma}

\begin{proof}
  Let~$\Pi$ be a Circular Resolution proof of~$A$
  from~$A_1,\ldots,A_m$, and let~$F$ be the corresponding flow
  assignment.  Let~$I$ and~$J$ be the sets of inference- and
  formula-vertices of~$G(\Pi)$, and let~$B(u)$ denote the balance of
  formula-vertex~$u \in J$ in~$G(\Pi,F)$. For each
  formula-vertex~$u \in J$ labelled by formula~$A_u$, define the
  polynomial~$P_u := T(A_u)$. For each inference-vertex~$w \in I$
  labelled by rule~$R$, with sets of in- and out-neighbours~$N^-$
  and~$N^+$, respectively, define the
  polynomial \begin{center} \begin{tabular}{llll}~$P_w$ &~$:=$
      &~$T(A_a)$ & if~$R = \text{axiom}$ with~$N^+ = \{a\}$, \\~$P_w$
      &~$:=$ &~$-T(A_a) - T(A_b) + T(A_c)$ & if~$R = \text{cut}$
      with~$N^- = \{a,b\}$ and~$N^+ = \{c\}$, \\~$P_w$ &~$:=$
      &~$-T(A_a) + T(A_b) + T(A_c)$ & if~$R = \text{split}$
      with~$N^- = \{a\}$ and~$N^+ =
      \{b,c\}$. \end{tabular} \end{center} By double counting, the
  following polynomial identity holds: \begin{equation} \sum_{u \in J}
    B(u) P_u = \sum_{w \in I} F(w)
    P_w. \label{eqn:identity} \end{equation} By hypothesis~$G(\Pi,F)$
  has a sink~$s$ labelled by the derived clause~$A$.
  Since~$B(s) > 0$, equation~\eqref{eqn:identity} rewrites
  into \begin{equation*} \sum_{w \in I} \frac{F(w)}{B(s)} P_w +
    \sum_{u \in J\setminus\{s\}} -\frac{B(u)}{B(s)} P_u =
    P_s. \end{equation*} We claim that this identity is a legitimate
  Sherali-Adams proof of~$T(A) \geq 0$ from the
  inequalities~$T(A_1) \geq 0,\ldots,T(A_m) \geq 0$.
  First,~$P_s = T(A_s) = T(A)$, i.e., the right-hand side is
  correct. Second, each term~$(F(w)/B(s)) P_w$ for~$w \in I$ is a sum
  of legitimate terms of a Sherali-Adams proof by the definition
  of~$P_w$ and Parts~1,~2 and~3 of Lemma~\ref{lem:ALN}. Third, since
  each source~$u \in I$ of~$G(\Pi,F)$ has~$B(u) < 0$ and is labelled
  by a formula in~$A_1,\ldots,A_m$, the term~$(-B(u)/B(s)) P_u$ of a
  source~$u \in I$ is a positive multiple of~$T(A_u)$ and hence also a
  legitimate term of a Sherali-Adams proof
  from~$T(A_1) \geq 0,\ldots,T(A_m) \geq 0$.  And fourth, since each
  non-source~$u \in I$ of~$G(\Pi,F)$ has~$B(u) \geq 0$, each
  term~$(-B(u)/B(s)) P_u$ of a non-source~$u \in I$ is a sum of
  legitimate terms of a Sherali-Adams proof by the definition of~$P_u$
  and Part~4 of Lemma~\ref{lem:ALN}.  The monomial size and degree of
  this Sherali-Adams proof are as claimed, and the proof of the Lemma
  is complete.
\end{proof}

\subsection{Simulation of Sherali-Adams} \label{sec:simulationof}

In this section we prove the other half of
Theorem~\ref{thm:circularresolutionvsSA}. We use the notation
from Section~\ref{sec:simulationby}.

\begin{lemma} \label{lmm:sartocircular} Let~$A_1,\ldots,A_m$ and~$A$
  be non-tautological clauses. If there is a Sherali-Adams proof
  of~$T(A) \geq 0$ from~$T(A_1) \geq 0, \ldots,T(A_m) \geq 0$ of
  monomial size~$s$ and degree~$d$, then there is a Circular
  Resolution proof of~$A$ from~$A_1,\ldots,A_m$ of length~$O(s)$ and
  width~$d$.
\end{lemma}

\begin{proof}
  Fix a Sherali-Adams proof of~$T(A) \geq 0$
  from~$T(A_1) \geq 0, \ldots,T(A_m) \geq 0$, say
\begin{equation}
\sum_{j=1}^t Q_j P_j = T(A), \label{eqn:proof}
\end{equation}
where~$Q_j$ is a non-negative linear combination of monomials on the
variables~$X_1,\ldots,X_n$ and~$\bar{X}_1,\ldots,\bar{X}_n$, and~$P_j$
is a polynomial from among~$T(A_1),\ldots,T(A_m)$ or from among the
polynomials in the list~\eqref{eqn:allowedindefinition} from the
definition of Sherali-Adams in Section~\ref{sec:prelim}.

Our goal is to massage the proof~\eqref{eqn:proof} until it becomes a
Circular Resolution proof in disguise.  Towards this, as a first step,
we claim that~\eqref{eqn:proof} can be transformed into a \emph{normalized
proof} of the form
\begin{equation}
\sum_{j=1}^{t'} Q'_j P'_j = T(A) \label{eqn:proofnormalized}
\end{equation}
that has the following properties:
\begin{enumerate} \itemsep=0pt
\item each~$Q'_j$ is a positive multiple of a multilinear monomial,
  and~$Q'_jP'_j$ is multilinear,
\item each~$P'_j$ is a polynomial among~$T(A_1),\ldots,T(A_m)$, or among
  the polynomials in the set
\begin{equation}
\{ -X_i\bar{X}_i, 1-X_i-\bar{X}_i, X_i+\bar{X}_i-1 : i \in [n]\} \cup \{1\}.
\label{eqn:allowedtwo}
\end{equation}
\end{enumerate}
Comparing~\eqref{eqn:allowedtwo} with the original
list~\eqref{eqn:allowedindefinition} in the definition of
Sherali-Adams, note that we have replaced the polynomials~$X_i-X_i^2$
and~$X_i^2-X_i$ by~$-X_i\bar{X}_i$. Note also that, by splitting
the~$Q_j$'s into their terms, we may assume without loss of generality
that each~$Q_j$ in~\eqref{eqn:proof} is a positive multiple of a
monomial on the variables~$X_1,\ldots,X_n$
and~$\bar{X}_1,\ldots,\bar{X}_n$.

In order to prove the claim we rely on the well-known fact that each
real-valued function over the Boolean domain has a unique representation
as a multilinear polynomial:

\begin{fact} \label{fact:uniqueness} For every natural number~$N$ and
  every function~$f : \{0,1\}^{N} \rightarrow \mathbb{R}$ there is a
  unique multilinear polynomial~$P$ with~$N$ variables
  satisfying~$P(a_1,\ldots,a_N) = f(a_1,\ldots,a_{N})$ for
  every~$a_1,\ldots,a_N \in \{0,1\}$.
\end{fact}

With this fact in hand, it suffices to convert each~$Q_jP_j$ in the
left-hand side of~\eqref{eqn:proof} into a~$Q'_jP'_j$ of the required
form (or~$0$), and check that~$Q_jP_j$ and~$Q'_jP'_j$ are equivalent
over the 0-1 assignments to its variables (without relying on the
constraint that~$\bar{X}_i = 1-X_i$). The claim will follow from the
combination of Fact~\ref{fact:uniqueness} and the fact that~$T(A)$ is
multilinear since, by assumption,~$A$ is non-tautological.

We proceed to the conversion of each~$Q_jP_j$ into a~$Q'_jP'_j$ of the
required form. Recall that we assumed already, without loss of
generality, that each~$Q_j$ is a positive multiple of a monomial. The
multilinearization of a monomial~$Q_j$ is the monomial~$M(Q_j)$ that
results from replacing every factor~$Y^k$ with~$k \geq 2$ in~$Q_j$
by~$Y$. Obviously~$Q_j$ and~$M(Q_j)$ agree on 0-1 assignments, but
replacing each~$Q_j$ by~$M(Q_j)$ is not enough to guarantee the normal
form that we are after. We need to proceed by cases on~$P_j$.

If~$P_j$ is one of the polynomials among~$T(A_1),\ldots,T(A_m)$,
say~$T(A_i)$, then we let~$Q'_j$ be~$M(Q_j)$ with every variable that
appears in~$A_i$ deleted, and let~$P'_j$ be~$T(A_i)$ itself. It is
obvious that this works. If~$P_j$ is~$1-X_i-\bar{X}_i$, then we
proceed by cases on whether~$Q_j$ contains~$X_i$ or~$\bar{X}_i$ or
both or neither. If~$Q_j$ contains neither~$X_i$ nor~$\bar{X}_i$, then
the choice~$Q'_j = M(Q_j)$ and~$P'_j = P_j$ works.  If~$Q_j$
contains~$X_i$ or~$\bar{X}_i$, call it~$Y$, but not both, then the
choice~$Q'_j = M(Q_j)/Y$ and~$P'_j = -X_i\bar{X}_i$ works. If~$Q_j$
contains both~$X_i$ and~$\bar{X}_i$, then the
choice~$Q'_j = M(Q_j)/(X_i\bar{X}_i)$ and~$P'_j = -X_i\bar{X}_i$
works. If~$P_j$ is~$X_i+\bar{X}_i-1$, then again we proceed by cases
on whether~$Q_j$ contains~$X_i$ or~$\bar{X}_i$ or both or neither.
If~$Q_j$ contains neither~$X_i$ nor~$\bar{X}_i$, then the
choice~$Q'_j = M(Q_j)$ and~$P'_j = P_j$ works.  If~$Q_j$
contains~$X_i$ or~$\bar{X}_i$, call it~$Y$, but not both, then the
choice~$Q'_j = M(Q_j)\bar{Y}$ and~$P'_j = 1$ works. If~$Q_j$ contains
both~$X_i$ and~$\bar{X}_i$, then the choice~$Q'_j = M(Q_j)$
and~$P'_j = 1$ works. If~$P_j$ is the polynomial~$1$, then the
choice~$Q'_j = M(Q_j)$ and~$P'_j = 1$ works. Finally, if~$P_j$ is of
the form~$X_i - X_i^2$ or~$X_i^2 - X_i$, then we replace~$Q_jP_j$
by~$0$. Observe that in this case~$Q_jP_j$ is always~$0$ over 0-1
assignments, and the conversion is correct. This completes the proof
that~\eqref{eqn:proofnormalized} exists.

It remains to be seen that the normalized
proof~\eqref{eqn:proofnormalized} is a Circular Resolution proof in
disguise. For each~$j \in [t']$, let~$a_j$ and~$M_j$ be the positive
real and the multilinear monomial, respectively, such
that~$Q'_j = a_j M_j$. Let also~$C_j$ be the unique clause on
the variables~$X_1,\ldots,X_n$ such that~$T(C_j) = -M_j$. Let~$[t']$
be partitioned into five
sets~$I_0 \cup I_1 \cup I_2 \cup I_3 \cup I_4$ where
\begin{enumerate} \itemsep=0pt
\item $I_0$ is the set of $j \in [t']$ such that
$P'_j = T(A_{i_j})$ for some $i_j \in [m]$,
\item $I_1$ is the set of $j \in [t']$ such that 
$P'_j = -X_{i_j}\bar{X}_{i_j}$ for some $i_j \in [n]$,
\item $I_2$ is the set of $j \in [t']$ such that
$P'_j = 1-X_{i_j}-\bar{X}_{i_j}$ for some $i_j \in [n]$,
\item $I_3$ is the set of $j \in [t']$ such that
$P'_j = X_{i_j}+\bar{X}_{i_j}-1$ for some $i_j \in [n]$,
\item $I_4$ is the set of $j \in [t']$ such that
$P'_j = 1$.
\end{enumerate}
Note that the pair of sets~$I_0$ and~$I_1$ is disjoint because each
clause~$A_{i_j}$ is non-tautological by assumption. Since the rest of
pairs are clearly disjoint, we get a partition of~$[t']$.  Define new
polynomials~$P''_j$ as follows:
\begin{enumerate} \itemsep=0pt
\item[] $P''_j := T(C_j \vee A_{i_j})$ for $j \in I_0$,
\item[] $P''_j := T(C_j \vee \overline{X_{i_j}} \vee X_{i_j})$ for $j \in I_1$,
\item[] $P''_j := - T(C_j) + T(C_j \vee \overline{X_{i_j}}) + 
T(C_j \vee X_{i_j})$ for $j \in I_2$,
\item[] $P''_j := - T(C_j \vee \overline{X_{i_j}}) - T(C_j \vee X_{i_j}) + 
T(C_j)$ for $j \in I_3$,
\item[] $P''_j := T(C_j)$ for $j \in I_4$.
\end{enumerate}
With this notation,~\eqref{eqn:proofnormalized} rewrites into
\begin{equation}
\sum_{j \in I_0} a_j P''_j +
\sum_{j \in I_1} a_j P''_j +
\sum_{j \in I_2} a_j P''_j + 
\sum_{j \in I_3} a_j P''_j
=
T(A) + \sum_{j \in I_4} a_j P''_j.
\label{eqn:proofalmostcircular}
\end{equation}

Finally we are ready to construct the circular proof. We build it by
listing the inference-vertices with their associated flows, and then
we identify together all the formula-vertices that are labelled by the
same clause. 

Intuitively,~$I_0$'s are weakenings of hypothesis clauses,~$I_1$'s are
weakenings of axioms,~$I_2$'s are cuts,~$I_3$'s are splits,
and~$I_4$'s can be thought of as left-overs, i.e., clauses that were
produced but never used. Formally, each~$j \in I_0$ becomes a chain
of~$|C_j|$ many split vertices that starts at the hypothesis
clause~$A_{i_j}$ and produces its weakening~$C_j \vee A_{i_j}$; all
split vertices in this chain have flow~$a_j$. Each~$j \in I_1$ becomes
a sequence that starts at one axiom vertex that
produces~$X_{i_j} \vee \overline{X_{i_j}}$ with flow~$a_j$, followed
by a chain of~$|C_j|$ many split vertices that produces its
weakening~$C_j \vee X_{i_j} \vee \overline{X_{i_j}}$; all split
vertices in this chain also have flow~$a_j$. Each~$j \in I_2$ becomes
one cut vertex that produces~$C_j$ from~$C_j \vee X_{i_j}$
and~$C_j \vee \overline{X_{i_j}}$ with flow~$a_j$. And
each~$j \in I_3$ becomes one split vertex that
produces~$C_j \vee X_{i_j}$ and~$C_j \vee \overline{X_{i_j}}$
from~$C_j$ with flow~$a_j$. Note that some of the produced clauses,
such as the second conclusion in certain intermediate splits, may
never be used as hypothesis in other rules.

This defines the inference-vertices of the proof graph. The
construction is completed by introducing one formula-vertex for each
different clause that occurs as a premise or a conclusion of
these inference-vertices. The construction was designed in such a way
that equation~\eqref{eqn:proofalmostcircular} witnesses that this
proof graph and its associated flow assignment makes a correct
circular proof.  In order to see this we need to compute the balances
of all the formula-vertices, check that some sink is labelled~$A$, and
check that all sources are labelled by formulas in~$A_1,\ldots,A_m$.

First note that, for each formula-vertex~$u$ whose labelling formula
appears as an intermediate formula in a chain of split inferences for
some~$j \in I_0 \cup I_1$, the contribution of this occurrence to the
balance of~$u$ is zero. Indeed, for being an intermediate formula in
the chain, the flow~$a_j$ of the inference that produces it cancels
out the flow~$a_j$ of the inference that consumes it. Second note that
the only formula-vertices~$u$ whose labelling formulas are of the
form~$\overline{X_{i_j}} \vee X_{i_j}$ are those that are introduced
by an axiom inference in case~$j \in I_1$; this follows from the
multilinearity of the~$Q'_jP'_j$ and the assumption that the~$A_i$ are
non-tautological.  It follows that the balance of such~$u$ is
positive; indeed its balance is the flow~$a_j$ of the axiom inference
that produces it. Third note that we do not need to compute the
balance of the formula-vertices~$u$ whose labelling formula is a
hypothesis among~$A_1,\ldots,A_m$; any balance is allowed for it. With
these three notes in mind it suffices to check that:
\begin{enumerate} \itemsep=0pt
\item some sink is labelled~$A$,
\item for~$j \in I_0$ the formula-vertex for~$C_j \vee A_{i_j}$ has
  non-negative balance,
\item for~$j \in I_1$ the formula-vertex
  for~$C_j \vee \overline{X_{i_j}} \vee X_{i_j}$ has non-negative
  balance,
\item for~$j \in I_2 \cup I_3$, the vertices
  for~$C_j$,~$C_j \vee \overline{X_{i_j}}$, and~$C_j \vee X_{i_j}$
  have non-negative balance.
\end{enumerate}
For (1) it suffices to note that~$T(A)$ appears in the right-hand side
of~\eqref{eqn:proofalmostcircular} with a positive multiplier, hence
its occurrences in the left-hand side must have multipliers that add
up to a positive quantity. For (2), (3), and (4) it suffices to note
that if~$B$ is one of the indicated formulas, then~$T(B)$ appears in
the left-hand side of~\eqref{eqn:proofalmostcircular} with multipliers
that add-up to either a positive quantity or~$0$ depending on
whether~$T(B)$ appears in the right-hand side or not,
respectively. Hence its balance is non-negative, and the proof that
the circular proof is correct is complete. The claim that the length
of this proof is~$O(s)$ and its width is~$d$ follows by inspection.
\end{proof}

\section{Circular Frege vs Tree-Like Frege}

For some weak proof systems, such as Resolution, it makes a great deal
of difference whether the proof-graph has tree-like structure or not
\cite{BonetEstebanGalesiJohannsen2000}. For stronger proofs systems,
such as Frege, this is not the case. Indeed Tree-like Frege
polynomially simulates Dag-like Frege, and this holds true of any
inference-based proof system with the set of all formulas as its set
of allowed formulas, and a finite set of inference rules that is
implicationally complete \cite{Kra94BoundedArithmetic}.  Since
circular proofs further generalize the structure of the proof-graph,
it is interesting to discuss whether circular proofs in Frege are
complexity-wise more powerful than standard Frege proofs.

It turns out that this is not the case. 
In this section we show how to efficiently simulate Circular Frege, as
defined in Section~\ref{sec:definition}, by standard Frege proofs.

\begin{theorem}
Tree-like Frege and Circular Frege polynomially simulate each other.
\end{theorem}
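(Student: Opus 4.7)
The plan is to prove the two simulations separately. The forward direction, Circular Frege simulating Tree-like Frege, is immediate: a tree-like Frege proof is already a circular Frege proof with the flow assignment $F \equiv 1$ and no backedges, since every internal formula-vertex of a tree has in-degree and out-degree one (hence balance $0$), every leaf is a source and every root is a sink. Combined with Kraj\'i\v{c}ek's polynomial equivalence between tree-like and dag-like Frege, it therefore suffices to show that \emph{dag-like} Frege polynomially simulates Circular Frege; the tree-like version then follows by composing simulations.

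For the hard direction, the plan is to formalize the second (algebraic/LP-based) proof of Theorem~\ref{thm:soundness}, extended as in Theorem~\ref{thm:soundnessgeneral}, inside Frege itself. Let $\Pi$ be a circular Frege proof of the goal $A$ from hypotheses $\mathscr{H} = \{H_1,\ldots,H_m\}$, with witnessing flow $F$ that we may assume to be integer-valued and of bit-complexity polynomial in $\ell = |\Pi|$ by Lemma~\ref{lem:integralflow}. The constructed Frege proof will derive $A$ from $H_1,\ldots,H_m$ by assembling the following three ingredients. First, for each inference-vertex $w$ of $G(\Pi)$ with antecedents $N^-(w)$ and consequent $c_w$, produce a constant-size Frege derivation of the local soundness inequality $(1 - A_{c_w}) \leq \sum_{u \in N^-(w)} (1 - A_u)$, interpreted as an inequality between small non-negative integers encoded by the bits $\overline{A_{c_w}}, \overline{A_u}$; this is a bounded-fan-in propositional tautology whose Frege proof is of constant size for each of the four rules of \eqref{eqn:Fregerules}. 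Second, multiply the $w$-th local inequality by the binary numeral $F(w)$ and sum over all $w$, obtaining an inequality between two binary numerals of polynomial length. Third, rearrange the double sum by formula-vertex to conclude
\[
\sum_{u \in J} B(u)\,(1 - A_u) \;\leq\; 0,
\]
where $B(u)$ is the balance of $u$ in $G(\Pi, F)$; then use the flow-balance condition (so that $B(u) \geq 0$ for every non-source $u$) and the hypotheses (so that $1 - A_u = 0$ for every source $u$) to show that every term on the left is non-negative, hence in particular the term $B(s)(1 - A) \leq 0$ for the goal-sink $s$, and since $B(s) \geq 1$ is a positive integer this forces $1 - A = 0$, i.e.\ $A$.

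The principal obstacle is not the high-level structure of this argument but the polynomial-size formalization of the binary integer arithmetic involved: the flow coefficients $F(w)$ can be as large as $\ell!$, and we need efficient Frege proofs of the rearrangement identity
\[
\sum_{w \in I} F(w)\!\left(\sum_{u \in N^-(w)}(1 - A_u) - (1 - A_{c_w})\right) \;=\; -\sum_{u \in J} B(u)(1 - A_u),
\]
together with the monotonicity and non-negativity facts about sums of binary numerals needed to combine the local inequalities. This is exactly the machinery developed by Buss for the Frege proof of the pigeonhole principle~\cite{Buss1987} and extended by Goerdt to simulate Cutting Planes in tree-like Frege~\cite{Goerdt1991CPvsFrege}: both works provide polynomial-size Frege proofs of the usual laws of binary arithmetic (associativity and commutativity of addition, distributivity of scalar multiplication, and comparison) as well as correctness of linear combinations with integer coefficients of polynomial bit-length. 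Plugging these into the three stages above yields a polynomial-size dag-like Frege proof of $A$ from $\mathscr{H}$, and a final invocation of Kraj\'i\v{c}ek's tree/dag equivalence finishes the theorem.
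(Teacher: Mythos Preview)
Your proposal is essentially the paper's own argument: formalize the LP-based soundness proof (Theorem~\ref{thm:soundness}, second proof) inside Frege using the Buss--Goerdt arithmetic machinery, and invoke the tree/dag equivalence for the last step. The three-stage breakdown (local rule inequalities, weighted summation by $F(w)$, rearrangement by formula-vertex) matches the paper's Steps~1--3 almost exactly.

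Two points deserve tightening. First, your claim that the local soundness inequality has a \emph{constant-size} Frege proof is not literally correct: the formulas $A_u$ labelling the antecedents and consequent can be arbitrarily large, and if you treat $\overline{A_a},\overline{A_b},\overline{A_c}$ as independent bits the inequality is not even a tautology (for cut, nothing forces $(1-A_c)\leq(1-A_a)+(1-A_b)$ without the structural relationship $A_a=C\vee X$, $A_b=C\vee\overline{X}$, $A_c=C$). The paper handles this by first writing the inequality in fresh propositional variables $Z_u$, then observing that the substituted formula $I(\ell_w)^*$ is itself a substitution instance of a tautology $T$ in only \emph{two} variables $X,Y$ (one for the principal formula, one for the side formula); $T$ has a constant-size Frege proof, and applying the substitution yields a polynomial-size proof of $I(\ell_w)^*$. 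You should make this substitution step explicit. Second, the paper first reduces to refutations (cut the derived $A$ against a new hypothesis $\overline{A}$, and conversely replace uses of $\overline{A}$ by the axiom $A\vee\overline{A}$), which streamlines the endgame; your direct version works too but the reduction is cleaner. With these two adjustments your write-up would coincide with the paper's proof.
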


The main idea underlying the simulation of Circular Frege by standard
Frege is to formalize, in standard Dag-like Frege itself, the LP-based
proof of soundness of Frege circular proofs; cf.\ the \emph{second
  proof} of Theorem~\ref{thm:soundness}.
To do that we use
a formalization of linear arithmetic in Frege, due to
Buss~\cite{Buss1987} and Goerdt~\cite{Goerdt1991CPvsFrege}. This
formalization was originally designed to simulate counting arguments
and Cutting Planes in Frege. Since Cutting Planes subsumes
LP-reasoning, the core of the LP-based proof of
Theorem~\ref{thm:soundness} can be formalized in it.  It should be
pointed out that, while the Frege systems that were used by
Buss~\cite{Buss1987} and Goerdt~\cite{Goerdt1991CPvsFrege} are not
exactly the same as ours, their results apply to our definition of
Frege; this follows from the well-known robustness properties of Frege
systems which guarantee that any two (Dag-like) Frege systems
polynomially simulate each other~\cite{CookReckhow1979}.

\subsection{Formalization of Linear Arithmetic in Frege}

We collect the relevant parts of Goerdt's results in a single
theorem. Before that, we need to introduce some notation.  Fix a set
of~$n$ Boolean variables~$X_1,\ldots,X_n$.  Let~$\mathscr{L}$ denote
the collection of all linear inequalities of the form
\begin{equation}
a_1 X_1 + \cdots + a_n X_n \geq b,
\label{eqn:inequality}
\end{equation} 
where~$X_1,\ldots,X_n$ are formal variables and~$a_1,\ldots,a_n$
and~$b$ are integers. 
Our notation includes inequalities of the form~$0 \ge b$,
and~$a_{i_1} X_{i_1} + \cdots + a_{i_t} X_{i_t} \ge b$ by deleting the
terms with zero coefficients. By the \emph{set of variables} of an
inequality we mean the set of variables that appear in it with
non-zero coefficient.

Let~$\ell$ and~$\ell'$ denote two inequalities in~$\mathscr{L}$,
with sequences of coefficients~$b,a_1,\ldots,a_n$
and~$b',a'_1,\ldots,a'_n$.
Let~$c$ be a positive integer.  We
write~$c \cdot \ell$ and~$\ell + \ell'$ for the following two
inequalities, respectively:
\begin{eqnarray*}
& ca_1 X_1 + \cdots + ca_n X_n \geq cb, \\
& (a_1 + a'_1) X_1 + \cdots + (a_n + a'_n) X_n \geq (b + b').
\end{eqnarray*}
Let~$\mathscr{F}$ denote the collection of all propositional formulas
in negation normal form. We move back and forth between the truth
assignments and the~$0$-$1$ assignments for the same sets of
variables.  If~$f : \{X_1,\ldots,X_n\} \rightarrow \{0,1\}$ denotes
such an assignment, and~$\ell$ and~$A$ denote, respectively, an
inequality and a formula on the variables~$X_1,\ldots,X_n$, then we
write~$f(\ell)$ and~$f(A)$ for their truth values under~$f$.
Concretely, if~$\ell$ is as in~\eqref{eqn:inequality}, then~$f(\ell)$
is true if and only if~$a_1 f(X_1) + \cdots + a_n f(X_n)$ is at
least~$b$.

For concreteness, we specify a size measure for inequalities. An
inequality~$\ell$ as in~\eqref{eqn:inequality} is represented by the
sequence of the binary encodings of its
coefficients~$b,a_1,\ldots,a_n$; its size
is~$\Theta(n+\log_2(|b|) + \sum_{i=1}^n \log_2(|a_i|))$, with the
convention that~$\log_2(0) = 0$, and where~$|v|$ denotes the absolute
value of~$v \in \integers$.

\begin{theorem}[\cite{Goerdt1991CPvsFrege}]
  \label{thm:cpencoding}
  There is a mapping $I : \mathscr{L} \rightarrow \mathscr{F}$ that
  takes linear inequalities to formulas and that has the following
  properties.  For every two inequalities $\ell$ and $\ell'$ in
  $\mathscr{L}$, every positive integer $c$, every truth assignment
  $f$, and every variable $X$, the following hold:
  \begin{enumerate} \itemsep=0pt
  \item $I(\ell)$ has size polynomial in the size of $\ell$,
  \item~$I(\ell)$ has the same variables as~$\ell$,
    and~$f(\ell) = f(I(\ell))$,
  \item there is a polynomial-size Frege proof of $I(\ell+\ell')$ from
  $I(\ell)$ and $I(\ell')$,
  \item there is a polynomial-size Frege proof of $I(c\cdot\ell)$ from
  $I(\ell)$,
  \item there is a polynomial-size Frege proof of $I(\ell)$ from
  $I(c\cdot \ell)$,
  \item there is a polynomial-size Frege proof of $I(X \geq 1)$ from $X$,
  \item there is a polynomial-size Frege proof of $I(-X \geq 0)$ from $\overline{X}$,
  \item there is a polynomial-size Frege proof of $I(-X \geq -1)$ from nothing,
  \item there is a polynomial-size Frege proof of $I(X \geq 0)$ from nothing,
  \item there is a polynomial-size Frege proof of $I(0 \geq 0)$ from nothing,
  \item there is a polynomial-size Frege proof of $\emptyformula$ from
    $I(0 \geq 1)$.
  \end{enumerate}
  Moreover, the mapping~$I$ and the Frege proofs in Points~3--11 are
  all computable in time that is bounded by a fixed polynomial in the
  sizes of the input and the output inequalities.
\end{theorem}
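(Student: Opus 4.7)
The plan is to follow the Buss--Goerdt strategy for formalizing integer arithmetic in polynomial-size Frege, with the encoding $I$ built so that semantic properties (2, 7--11) are essentially immediate from the definition, and the arithmetic manipulation properties (3--6, 12) reduce to a handful of Frege-provable identities about binary addition and comparison.

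First I would fix the encoding. For an inequality $\ell$ of the form $a_1 X_1 + \cdots + a_n X_n \geq b$, split the coefficients by sign and consider the nonnegative sums $S_+ := \sum_{a_i > 0} a_i X_i$ and $S_- := \sum_{a_i < 0} |a_i| X_i$, together with the constants from $b$. Each term $a_i X_i$ has an obvious polynomial-size propositional encoding of its binary digits (the $j$-th bit is $X_i$ if bit $j$ of $|a_i|$ is $1$, and $\emptyformula$ otherwise). To add these polynomially many bounded-length binary numbers, I use Buss's iterated-addition construction, which produces, for each output bit, a propositional formula of size polynomial in the total input size. The formula $I(\ell)$ is then the bit-wise comparison formula asserting $S_+ \geq S_- + b$ in the obvious lexicographic-on-the-bits sense. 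Property 1 (polynomial size) and Property 2 (pointwise agreement with $f(\ell)$) follow directly: size from the explicit polynomial bounds on Buss's iterated-addition circuit depth and fan-in, and semantic correctness by unfolding the definitions.

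The heart of the work is then Properties 3--6, where I need short Frege proofs that the encoding respects the algebraic operations on $\mathscr{L}$. Here I invoke Goerdt's polynomial-size Frege proofs of the following facts about the binary-addition gadgets: (i) adding zero coefficients does not change the encoded formula, giving Property 3; (ii) concatenating two iterated additions matches the iterated addition of the concatenation, giving Property 4 for $\ell + \ell'$; (iii) multiplication by a positive integer $c$ can be realized by $c$-fold repetition of the input and the iterated-addition gadget, and Frege can prove that this is equivalent to the encoding of $c \cdot \ell$, establishing Property 5; and (iv) division by $c$ on both sides preserves the $\geq$ relation between integers, which Goerdt proves by formalizing the school division algorithm inside Frege, giving Property 6. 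The technically main obstacle is Property 6, since going from $I(c \cdot \ell)$ back to $I(\ell)$ involves a nontrivial use of the fact that $a \geq cb$ implies $\lceil a/c \rceil \geq b$; this is exactly where Buss's and Goerdt's formalizations of binary division become indispensable.

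Properties 7--11 are easy base cases, obtained by inspecting the encoding of the relevant one-variable or constant inequality and giving a direct polynomial-size Frege derivation (for instance, $I(X \geq 1)$ unwinds to something Frege-equivalent to $X$, and $I(0 \geq 0)$ unwinds to a tautology built from $\fullformula$'s). Property 12 is dual: $I(0 \geq 1)$ unwinds, under Buss's encoding of the zero polynomial, to a formula that Frege can refute in polynomial size because the comparison gadget reduces to comparing $0$ with the nonzero constant. Finally, the polynomial-time computability of $I$ and of all the Frege proofs above is manifest from the explicit inductive constructions of the encoding and the gadgets. Since every item has been verified by Goerdt for Cutting Planes reasoning, the role of this proof is to collect and specialize those constructions into the single interface used in the present paper.
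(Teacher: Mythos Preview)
Your sketch is a faithful outline of the Buss--Goerdt construction and would, if carried out, establish the theorem. The paper, however, does not prove this statement at all: its entire proof is a citation, pointing out that the mapping $I$ is defined in Section~2.6 of Goerdt's article, that Properties~1--3 and~7--12 follow by inspecting that definition, and that Properties~4,~5, and~6 are precisely Goerdt's Theorems~3.1,~3.5, and~3.6. Your identification of Properties~4--6 (addition, scalar multiplication, and division by a positive integer) as the substantive work and of the remaining items as base cases or inspection matches this attribution exactly. So there is no divergence in approach, only in level of detail: you are sketching what Goerdt actually does, while the paper is content to cite him.
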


\begin{proof}
  All this can be found in Goerdt's
  article~\cite{Goerdt1991CPvsFrege}, which in turn builds on Buss's
  seminal~\cite{Buss1987}: the definition of the mapping~$I$ is in
  Section 2.6 of Goerdt's article, Points 1--2, and Points 6--11
  follow by inspection of the definition of~$I$ given there, and
  Points 3, 4, and 5 are Theorems~3.1,~3.5, and~3.6 in Goerdt's
  article, respectively.
\end{proof}

Technically the inequalities are defined in~\eqref{eqn:inequality}
with the variables on the left side of the~$\geq$ relation symbol and
the constant on the right side. For readability reasons, in the
following we apply~$I$ to linear inequalities written in any form that
follows from adding or subtracting the same linear terms from both
sides of an inequality in official form; e.g., when we
write~$I(-1 \geq 0)$ or~$I(X_1 - 2 \geq X_2)$ we really
mean~$I(0 \geq 1)$ and~$I(X_1-X_2\geq 2)$, respectively.

\subsection{Proof of the Simulation}

This section is devoted to the proof of Theorem 4. The statement that
Circular Frege polynomially simulates Tree-like Frege follows from the
discussion in Section~\ref{sec:symmetricrules}. We concentrate on the
reverse simulation. Since it is known that Tree-like Frege
polynomially simulates Dag-like Frege, it suffices to do the
simulation through dag-like proofs. Also we claim that it suffices to
do the simulation only for refutations. Indeed, from a short circular
proof of~$A$ from~$\mathscr{H}$ we can get a short circular refutation
of~$\mathscr{H} \cup \{\overline{A}\}$ by adding a cut between the
derived~$A$ and the new hypothesis~$\overline{A}$, with flow equal to
the balance of the formula-vertex of~$A$. And from a short dag-like
refutation of~$\mathscr{H} \cup \{\overline{A}\}$ we can get a short
dag-like proof of~$A$ from~$\mathscr{H}$ by replacing each use of the
hypothesis formula~$\overline{A}$ by the axiom
instance~$A \vee \overline{A}$.

Let~$\Pi$ be a Circular Frege refutation of a set of hypothesis
formulas~$\mathscr{H}$. Let us choose an arbitrary~$s$ among the
formula-vertices in~$G(\Pi)$ that are labelled by the empty formula.
The simulation goes in three steps. In the first step we build an
infeasible linear program~$P = \{\ell_1,\ldots,\ell_m\}$ that has one
variable~$Z_u$ for each formula-vertex~$u$ of~$G(\Pi)$. Any witness of
infeasibility of~$P$ will witness the soundness of~$\Pi$ as a circular
refutation. This is done by closely following the second proof of
soundness of circular proofs; cf.~Theorem~\ref{thm:soundness}. In the
second step we apply Theorem~\ref{thm:cpencoding} to convert an
LP-based witness of infeasibility for~$P$ into a Frege refutation of
the set of
formulas~$\mathscr{H}' := \{ I(\ell_1),\ldots,I(\ell_m) \}$. Here~$I$
is the mapping from Theorem~\ref{thm:cpencoding}. In the third step we
apply the substitution defined by~$Z_u := A_u$ to this Frege
refutation, where~$A_u$ is the formula that labels the
formula-vertex~$u$, and we apply Theorem~\ref{thm:cpencoding} again to
show that each formula in the substituted~$\mathscr{H'}$ has an
efficient Frege proof from~$\mathscr{H}$.

\bigskip
\noindent\textit{First step}.\;\;
The linear program~$P$ has one variable~$Z_u$ for each
formula-vertex~$u \in J$ in~$G(\Pi)$, and two sets of
inequalities~$P_J = \{ \ell_u : u \in J \}$
and~$P_I = \{ \ell_w : w \in I \}$ indexed by the sets of
formula-vertices~$J$ and inference-vertices~$I$ of~$G(\Pi)$,
respectively. Concretely, for each formula-vertex~$u \in J$ the
inequality~$\ell_u$ is defined as follows:
\begin{center}
  \begin{tabular}{rcll}
    $-Z_s$ & $\geq$ & $0$ & for the formula-vertex $s$
                           labelled by the derived empty formula, \\
    $-(1-Z_u)$ & $\geq$ & $0$ & if $u$ is a formula-vertex of
                               a hypothesis formula, \\
    $(1-Z_u)$ & $\geq$ & $0$ & if $u$ is any other formula-vertex.
  \end{tabular}
\end{center}
For each inference-vertex~$w$ with labelling inference rule~$R$ and
in- and out-neighbours~$N^-$ and~$N^+$, respectively, the
inequality~$\ell_w$ is defined as follows:
\begin{center}
  \begin{tabular}{rcll}
    $-(1-Z_a)$ & $\geq$ & $0$ & if $R = \text{axiom}$ with $N^+ =
                               \{a\}$, \\
    $(1-Z_a) + (1-Z_b) - (1-Z_c)$ & $\geq$ & $0$ & if $R = \text{cut}$
                                                with $N^- = \{a,b\}$
                                                and $N^+ = \{c\}$, \\
    $(1-Z_a) - (1-Z_b) - (1-Z_c)$ & $\geq$ & $0$ & if $R = \text{split}$ with $N^- = \{a\}$ and $N^+
                                                = \{b,c\}$.
  \end{tabular}
\end{center}
A certificate of the infeasibility of~$P = P_I \cup P_J$ is given by
two assignments of non-negative weights~$(b_u : u \in J)$
and~$(c_w : w \in I)$ for the inequalities in~$P_J$ and~$P_I$,
respectively, in such a way that the corresponding positive linear
combination
\begin{equation}
  \sum_{u \in J} b_u \cdot \ell_u + \sum_{w \in I} c_w \cdot \ell_w \label{eqn:plc}
\end{equation}
simplifies to the trivially false inequality~$-1 \geq 0$ (or
equivalently~$0 \geq 1$).
In turn, such an assignment of weights can be shown to exist from the
assumption that~$\Pi$ is a valid circular refutation: let~$F$ be a
flow assignment that witnesses that~$\Pi$ is a valid proof and~$B$ the
corresponding balance on the formula-vertices, and
set~$b_u := -B(u)/B(s)$ for each formula-vertex~$u \in J$ that is a
source of~$G(\Pi,F)$, set~$b_u := B(u)/B(s)$ for each
formula-vertex~$u \in J$ that is not a source of~$G(\Pi,F)$, and
set~$c_w := F(w)/B(s)$ for every inference-vertex~$w \in I$. Note
that~$B(s)$ is strictly positive because~$s$ must be a sink
of~$G(\Pi,F)$.  This means that each~$b_u$ is well-defined and
non-negative because the balance of all formula-vertices except the
sources is non-negative in~$G(\Pi,F)$.
To prove that this assignment of weights makes~\eqref{eqn:plc} to
simplify to $-1 \geq 0$ observe that the second sum in \eqref{eqn:plc}
is equal to the right hand side of \eqref{eqn:second} divided by
$B(s)$. Hence \eqref{eqn:plc} can be written as
\begin{equation}
  -Z_{s} + \sum_{u \in J/\{s\}} \frac{B(u)}{B(s)} \cdot (1-Z_{u})
  \quad
  +
  \quad
  (Z_{s}- 1 ) + \sum_{u \in J/\{s\}} -\frac{B(u)}{B(s)} (1-Z_u) \geq 0\;.
\end{equation}
This completes the first step of the simulation.

\bigskip
\noindent\textit{Second step}.\;\; The second step is a direct
application of Theorem~\ref{thm:cpencoding}: Define the non-negative
integers~$b'_u = b_u \cdot B(s)$ and~$c'_w = c_w \cdot B(s)$. Start
at~$\mathscr{H}' = \{I(\ell_u) : u \in J\}\cup \{I(\ell_w) : w \in
I\}$.  By Points~4 and~10 in Theorem~\ref{thm:cpencoding}, obtain
Frege proofs of~$I(b'_u \cdot \ell_u)$ and~$I(c_w' \cdot \ell_w)$ for
each~$u \in J$ and each~$w \in I$. Now let~$\ell'$ denote the positive
linear combination defined as in~\eqref{eqn:plc} with~$b_u$ and~$c_w$
replaced by~$b'_u$ and~$c'_w$, respectively. Recall that~$\ell$
is~$-1 \geq 0$ and hence~$\ell'$ is~$-B(s) \geq 0$. By Point~3, obtain
a Frege proof of~$I(-B(s) \geq 0)$. By Point~5, obtain a Frege proof
of~$I(-1 \geq 0)$, or equivalently~$I(0 \geq 1)$.  Finally, by
Point~11, obtain the Frege proof of $\emptyformula$.

\bigskip
\noindent\textit{Third step}.\;\;
We start the third step by applying the substitution defined
by~$Z_u := A_u$ to the refutation of~$\mathscr{H}'$, where~$A_u$ is
again the formula that labels the formula-vertex~$u$. For
each~$v \in I \cup J$, let~$I(\ell_v)^*$ denote the result of applying
this substitution to~$I(\ell_v)$. To complete the step we need to get
polynomial-size Frege proofs of~$I(\ell_v)^*$ from~$\mathscr{H}$, for
each~$v \in I \cup J$. We do this as a \emph{less direct} application
of Theorem~\ref{thm:cpencoding}.

For each formula-vertex~$u \in J$ of a hypothesis formula
in~$\mathscr{H}$, we get a Frege proof of~$I(\ell_u)^*$ from~$A_u$ by
applying the substitution~$X := A_u$ to the Frege proof given by
Point~6 in Theorem~\ref{thm:cpencoding}. When~$u$ is the
formula-vertex of the derived empty formula, we get a Frege proof
of~$I(\ell_u)^*$ from~$\overline{0}$ by applying the
substitution~$X := 0$ to the Frege proof given by Point~7 in
Theorem~\ref{thm:cpencoding}. Since~$\overline{0}$ is the conclusion
of an instance of the axiom rule of Frege
(namely~$0 \vee \overline{0}$), this is a Frege proof of~$I(\ell_u)^*$
from nothing. For every other formula-vertex~$u$, we get a Frege proof
of~$I(\ell_u)^*$ from nothing by applying the substitution~$X := A_u$
to the Frege proof given by Point~8 in Theorem~\ref{thm:cpencoding}.

For each inference-vertex~$w \in I$, with labelling rule~$R$ and in-
and out-neighbours~$N^-$ and~$N^+$, we proceed as follows.  By Point~2
in Theorem~\ref{thm:cpencoding} and the soundness of~$R$, first note
that~$I(\ell_w)^*$ is a propositional tautology. We claim that, in
addition, this tautology is obtained by applying a substitution to
another tautology~$T$ that has at most two propositional variables~$X$
and~$Y$.  Concretely,~$T$ will itself be the result of applying a
substitution to~$I(\ell_w)$. We define~$T$ by cases depending on what
rule~$R$ is. If~$R$ is the axiom rule and~$N^+ = \{a\}$, then we
take~$T$ to be the result of applying the
substituion~$Z_a := X \vee \overline{X}$ to~$I(\ell_w)$.  If~$R$ is
the cut rule,~$N^- = \{a,b\}$ and~$N^+ = \{c\}$, then we take~$T$ to
be the result of applying the substitution~$Z_a := Y \vee
X$,~$Z_b := Y \vee \overline{X}$, and~$Z_c := Y$
to~$I(\ell_w)$. If~$R$ is the split rule,~$N^- = \{a\}$
and~$N^+ = \{b,c\}$, then we take~$T$ to be the result of applying the
substitution~$Z_a := Y$,~$Z_b := Y \vee X$,
and~$Z_c := Y \vee \overline{X}$ to~$I(\ell_w)$. By Point~2 in
Theorem~\ref{thm:cpencoding} and the soundness of~$R$, in all three
cases~$T$ is a tautology with at most two propositional variables.  By
the completeness of Frege,~$T$ has a constant-size Frege
proof. Applying the substitution that turns~$T$ into~$I(\ell_w)^*$ to
this proof we get a polynomial-size Frege proof of~$I(\ell_w)^*$ as
desired. This completes the third step, and the proof.

\commentout{  

\subsection{Old}

We will also need the following property of Frege proofs.
\begin{theorem}[Theorem 4.4.9 in~\cite{Kra94BoundedArithmetic}]
  \label{thm:frege}
  Consider a Frege a proof $\Pi$ of a formula $F$ from a set of
  formulas $\mathcal{H}$, all over propositional variables
  $x_{1}, \ldots, x_{n}$.
  Define $\Pi'$, $F'$, and $\mathcal{H}'$ by applying the substitution
  $x_{i} \leftarrow B_{i}$ to $\Pi$, $F$, and $\mathcal{H}$, respectively,
  where $B_{1}, \ldots, B_{n}$ are formulas over propositional
  variables $y_{1}, \ldots, y_{m}$.
  Then $\Pi'$ is a proof, in the same Frege proof system, of formula
  $F'$ from a set of formulas $\mathcal{H}'$, all over propositional variables
  $y_{1}, \ldots, y{n}$.
\end{theorem}

\begin{theorem}
Any proof in circular Frege can be transformed in polynomial time into
a proof in tree-like Frege. 
\end{theorem}
\begin{proof}
  Consider a circular Frege proof $\Pi$ that derives a formula $C$
  from a set of formulas $\mathcal{H}$.
  Because of implicational completeness of tree-like Frege we can
  assume that the goal formula is actually the false formula $\bot$, and
  without loss of generality we assume that $\bot$ is not in the set
  of hypothesis.
  Since it is well known that tree-like and dag-like Frege are
  equivalent~\cite[Lemma 4.4.8]{Kra94BoundedArithmetic}, we just
  convert the circular proof into dag-like Frege.
  The simulation goes in three steps
  \begin{enumerate}
    \item we build a linear program that proves the soundness of the
    circular refutation;
    \item we use the Frege encoding of linear programming
    reasoning by Goerdt~\cite{Goerdt1991CPvsFrege};
    \item we show that the encoding of the initial inequalities are
    efficiently provable from $\mathcal{H}$ in Frege.
  \end{enumerate}
  
  The first step of the simulation goes through the construction of an
  unsatisfiable linear. For every formula $F$ occurring in a formula
  vertex of $\Pi$ we consider a variable $z_{F}$ with the intended
  meaning that
  \begin{equation*}
    z_{F}=
    \begin{cases}
      1 & \text{if $F$ is satisfied;} \\
      0 & \text{otherwise.}
    \end{cases}
  \end{equation*}  
  We furthermore define the following linear inequalities on these
  variables
  \begin{align}
    &\label{eq:linvar} 0 \leq z_{F} \leq 1
    & \text{for every $F$ labeling a formula vertex;}
    \\
    &\label{eq:linfalse} z_{\bot} \leq 0
    & \text{the false formula;}
    \\
    &\label{eq:linaxiom} z_{F} \geq 1
    & \text{for each $F \in \mathcal{H}$ in the proof;}
    \\
    &\label{eq:lincutrule}
      (1- z_{F \vee \bar{A}}) + (1 - z_{F \vee A}) - n(1 - z_{F}) \geq 0
    & \text{for each cut rule used in the proof;}
    \\
    &\label{eq:linsplitrule}
      (1 - z_{F}) - (1- z_{F \vee \bar{A}}) - (1 - z_{F \vee A}) \geq 0
    & \text{for each split rule used in the proof.}
  \end{align}
  The inequalities~\eqref{eq:lincutrule} and~\eqref{eq:linsplitrule}
  enforce that if the variables corresponding to the premises of
  a rule are all $1$, then the variables corresponding to the
  conclusions have to be $1$ as well.
  By summing all inequalities~\eqref{eq:lincutrule} and
  \eqref{eq:linsplitrule} and we get
  \begin{equation}\label{eq:linsum}
    \sum_{F} n_{F} (1 - z_{F}) \geq 0
  \end{equation}
  where the sum is over all the formulas occurring in the proofs,
  where $ - n_{F}$ is equal to the sum of the balances of all formula
  vertices labeled by $F$.
  Notice that $n_{F} > 0$ only if $F \in \mathcal{H}$, therefore we can
  add positive multiples of rule~\eqref{eq:linaxiom} so that the sum
  (\ref{eq:linsum}) would only contain summands with $ n_{F} < 0$
  (which are the goal formulas of the proof).
  Recall that $\bot$ is among the goal formulas. By summing also
  positive multiples of the inequalities in (\ref{eq:linvar}) and dividing by
  a positive integer we get inequality
  \begin{equation*}
     - (1 - z_{\bot}) \geq 0
  \end{equation*}
  which, together with $z_{\bot} = 1$ gives the contradiction
  $0 \geq 1$.
  
  For the second step we use the encoding from
  Theorem~\ref{thm:cpencoding}, and we simulate efficiently the
  refutation of the linear program in Frege. This is possible because
  the contradiction is deduced using only rules in the theorem.
  
  So far we have only obtained a refutation defined on the
  propositional variables $z$, starting from the set of formulas
  $\{I_{\ell}\}_{\ell}$ where $\ell$ is among
  \begin{itemize}
    \item circular Frege rules as in (\ref{eq:lincutrule}) and (\ref{eq:linsplitrule});
    \item inequalities as in (\ref{eq:linvar}) and (\ref{eq:linaxiom});
    \item $z_{\bot} \leq 0$.
  \end{itemize}
  In order to turn that into a refutation from $\mathcal{H}$ we substitute each
  variable $z_{F}$ with the corresponding formula $F$. Frege proofs
  are closed under substitution of variables by formulas, and the
  resulting proof is only polynomially larger, since every $F$
  labeling a formula vertex has polynomial size.
  The new proof after substitution is only larger by
  a polynomial factor (see Theorem~\ref{thm:frege}).
  
  The proof is almost a full simulation (in dag-like Frege) of the
  circular Frege refutation, but the premises are not yet formulas
  from $\mathcal{H}$ and we need still to show how to derive them.
  Each of the initial formulas was obtained as follow: we started with
  a linear inequality $\ell\!\left(\vec{z}\right)$ over the $z$
  variables; then we encoded this inequality as a propositional
  formula $I\!\left(\vec{z}\right)$; then we substituted each variable
  $z_{F}$ with the corresponding formula $F$.

  The case of the inequality $\ell\!\left(\vec{z}\right)$ being either
  $ z_{\bot} \leq 0 $ or from (\ref{eq:linvar}) is easy. The resulting
  formula after the substitution is a tautology with a constant number
  of symbols. Therefore there is a Frege derivation of constant size
  for it.

  The most tricky case is when inequality $\ell\!\left(\vec{z}\right)$
  comes from an instantiation of a rule, as
  in~(\ref{eq:lincutrule,eq:linsplitrule}).
  Observe that $\ell\!\left(\vec{z}\right)$ has just a constant number
  of $z$ variables, and its corresponding encoding $I$ likewise. Let's
  call $z_{F_{1}}, \ldots, z_{F_{3}}$ such variables, corresponding to
  formulas $F_{1}, \ldots F_{3}$.
  Notice that $I(F_{1},F_{2},F_{3})$ is a tautology because the
  Frege rule is sound. Furthermore, each $F_{i}$ is obtained by
  substitution from some formulas $B_{i}$ that have a constant number
  of atoms. Propositional formula $I(B_{1},B_{2},B_{3})$ is a tautology too,
  because it is just a formula that claims the soundness of
  a corresponding circular Frege rule.
  Therefore it has a constant size proof $\Pi$ in Frege. The final step is to
  notice that by applying the same substitution that maps $B_{i}$ to
  $F_{i}$ to the whole proof $\Pi$ we get a small Frege proof $\Pi'$ of
  $I(F_{1},F_{2},F_{3})$, which is want we needed.
\end{proof}
}

\section{Concluding Remarks} \label{sec:concludingremarks}

One immediate consequence of Theorem~\ref{thm:circularresolutionvsSA} is that
there is a polynomial-time algorithm that automates the search for
Circular Resolution proofs of bounded width:

\begin{corollary} \label{cor:automate} There is an algorithm that,
  given an integer parameter $w$ and a set of clauses $A_1,\ldots,A_m$
  and $A$ with $n$ variables, returns a Circular Resolution proof of
  width $w$ of clause $A$ from $A_1,\ldots,A_m$, if there is one, and
  the algorithm runs in time polynomial in $m$ and $n^w$.
\end{corollary}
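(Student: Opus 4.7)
The plan is to reduce the search for a width-$w$ Circular Resolution proof to a feasibility question for a linear program, leveraging the equivalence between Circular Resolution and Sherali-Adams established in Theorem~\ref{thm:circularresolutionvsSA}. By Lemma~\ref{lmm:circulartosar}, if a width-$w$ Circular Resolution proof of $A$ from $A_1,\ldots,A_m$ exists, then there is a Sherali-Adams proof of $T(A)\geq 0$ from $T(A_1)\geq 0,\ldots,T(A_m)\geq 0$ of degree $w$. Conversely, by Lemma~\ref{lmm:sartocircular}, any such Sherali-Adams proof of monomial size $s$ converts back into a Circular Resolution proof of $A$ from $A_1,\ldots,A_m$ of width $w$ and length $O(s)$. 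So it suffices to exhibit an algorithm that, in time polynomial in $m$ and $n^w$, decides whether a degree-$w$ Sherali-Adams proof exists, and produces one when it does.

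To set up the linear program, first normalize: without loss of generality we may assume the proof has the form $\sum_{j=1}^{N} Q_j P_j = T(A)$ with exactly one multiplier $Q_j$ per polynomial $P_j$ drawn from the list $T(A_1),\ldots,T(A_m)$ together with the basic polynomials of~\eqref{eqn:allowedindefinition}, since sums of non-negative linear combinations of monomials are again non-negative linear combinations of monomials. This gives $N = m + O(n)$ multipliers. Each $Q_j$ is a non-negative combination of monomials in $X_1,\ldots,X_n,\bar X_1,\ldots,\bar X_n$ with total degree at most $w - \deg(P_j)$; the number of such monomials is at most $\binom{2n}{\leq w} = O(n^w)$.

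Introduce one non-negative real variable $y_{j,M}$ for the coefficient of each admissible monomial $M$ in each multiplier $Q_j$; this gives $N\cdot O(n^w) = \mathrm{poly}(m, n^w)$ variables in total. The polynomial identity $\sum_j Q_j P_j = T(A)$ translates into a system of linear equalities, one per monomial of degree at most $w$ in the $2n$ variables, of which there are $O(n^w)$. Together with non-negativity on all $y_{j,M}$, this is a linear feasibility program of size polynomial in $m$ and $n^w$, which can be solved in polynomial time by standard LP algorithms. If infeasible, then no degree-$w$ Sherali-Adams proof exists, hence no width-$w$ Circular Resolution proof exists by Lemma~\ref{lmm:circulartosar}. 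If feasible, read off the multipliers from the solution and apply the explicit construction in the proof of Lemma~\ref{lmm:sartocircular} to produce a width-$w$ Circular Resolution proof of $A$ from $A_1,\ldots,A_m$ in length $O(\mathrm{poly}(m, n^w))$.

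The only substantive point to check is that this LP formulation is complete, i.e.\ that restricting to one $Q_j$ per allowed $P_j$ and to multilinear-degree bookkeeping loses no Sherali-Adams proofs of degree $w$; this is immediate from merging multipliers that share the same $P_j$ and from the uniqueness of multilinear representations over $\{0,1\}$-assignments, exactly as used in the normalization step of the proof of Lemma~\ref{lmm:sartocircular}. No other obstacle arises: the simulation theorems do all the conceptual work, and the algorithm is a direct polynomial-time LP call.
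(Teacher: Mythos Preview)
Your proposal is correct and matches the paper's primary argument: the corollary is presented precisely as an immediate consequence of the width/degree equivalence in Theorem~\ref{thm:circularresolutionvsSA}, together with the standard LP-based automation of bounded-degree Sherali-Adams, which you spell out in appropriate detail. The paper additionally sketches a more direct alternative that bypasses Sherali-Adams entirely: lay down one formula-vertex for every clause of width at most $w$, add all applicable axiom, cut, and split inference-vertices among them, and then invoke Lemma~\ref{lem:checking} to test (again via an LP) whether this fixed pre-proof admits a witnessing flow assignment; this avoids the round-trip through Lemmas~\ref{lmm:circulartosar} and~\ref{lmm:sartocircular} at the cost of building the full width-$w$ proof-graph up front.
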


The proof-search algorithm of Corollary~\ref{cor:automate} relies on
linear programming because it relies on our translations to and from
Sherali-Adams, whose automating algorithm does rely on linear
programming~\cite{SheraliAdams1990}.  Based on the fact that the
number of clauses of width~$w$ is about $n^w$, a direct proof of
Corollary~\ref{cor:automate} is also possible, but as far as we see it
still relies on linear programming for finding the flow assignment.
It remains as an open problem whether a more combinatorial algorithm
exists for the same task.

Another consequence of the equivalence with Sherali-Adams is that
Circular Resolution has a length-width relationship in the style of
the one for Dag-like
Resolution~\cite{BenSassonWigderson2001}. This follows from
Theorem~\ref{thm:circularresolutionvsSA} in combination with the
size-degree relationship that is known to hold for Sherali-Adams (see
\cite{SegerlindPitassi2012,AtseriasOberwolfachSlides2017}). 
Combining this with the known lower bounds for Sherali-Adams
(see \cite{Grigoriev2001,SegerlindPitassi2012}), we get the following:

\begin{corollary}
  There are families of 3-CNF formulas $(F_n)_{n \geq 1}$, where $F_n$
  has $O(n)$ variables and $O(n)$ clauses, such that every Circular
  Resolution refutation of $F_n$ has width $\Omega(n)$ and
  size~$2^{\Omega(n)}$.
\end{corollary}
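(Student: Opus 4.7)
The plan is to transfer known lower bounds for the Sherali-Adams system to Circular Resolution via Theorem~\ref{thm:circularresolutionvsSA}. First I would pick the witnessing family: a family of 3-CNF formulas $(F_n)_{n\geq 1}$ with $O(n)$ variables and $O(n)$ clauses whose incidence graph is a good bipartite expander, such as the examples used by Grigoriev~\cite{Grigoriev2001} or Segerlind-Pitassi~\cite{SegerlindPitassi2012}. For these formulas the literature provides an $\Omega(n)$ degree lower bound for Sherali-Adams refutations, which is the ``linear degree lower bound'' cited just before the corollary.

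For the width claim, suppose $F_n$ has a Circular Resolution refutation of width $w$. Apply Lemma~\ref{lmm:circulartosar} with $A=\emptyclause$: this yields a Sherali-Adams derivation of $T(\emptyclause)\geq 0$ from the $T(C_i)\geq 0$ of degree $w$. Since $T(\emptyclause)=-1$, this is a genuine Sherali-Adams refutation of degree $w$, so the lower bound from the previous paragraph forces $w=\Omega(n)$. For the length lower bound I would invoke the size-degree trade-off for Sherali-Adams that the paper explicitly cites, namely that a Sherali-Adams refutation of an $n$-variable CNF of constant initial degree and monomial size $s$ can be converted into one of degree $O(\sqrt{n\log s})$. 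If $F_n$ had a Circular Resolution refutation of length $L$, then Lemma~\ref{lmm:circulartosar} would again produce a Sherali-Adams refutation of monomial size $O(L)$, and hence of degree $O(\sqrt{n\log L})$. Comparing with the $\Omega(n)$ degree lower bound forces $\log L=\Omega(n)$, i.e.\ $L=2^{\Omega(n)}$, as required.

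The main point to verify is the first step: pinning down in the literature a concrete family of 3-CNFs on $O(n)$ variables and $O(n)$ clauses whose incidence graph is expanding enough that the Grigoriev/Segerlind-Pitassi arguments deliver a genuinely linear degree lower bound for the variant of Sherali-Adams used here (i.e.\ with twin variables). Once this is available, everything else is a direct composition of Lemma~\ref{lmm:circulartosar} with the known Sherali-Adams size-degree trade-off; no new combinatorial or algebraic ideas are required beyond bookkeeping over these two black boxes.
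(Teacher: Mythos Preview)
Your proposal is correct and follows essentially the same route as the paper's own (sketched) argument: transfer the Sherali-Adams linear degree lower bounds for expanding 3-CNFs via Lemma~\ref{lmm:circulartosar} to get the width bound, and combine with the Sherali-Adams size--degree trade-off (also via Lemma~\ref{lmm:circulartosar}) to get the exponential bound. One cosmetic point: your argument literally yields a lower bound on \emph{length} (number of vertices), whereas the corollary speaks of \emph{size}; to bridge this, note that one may merge formula-vertices with equal labels and inference-vertices with identical rule instances without destroying the witnessing flow, after which length is at most polynomial in size times $n$, so the $2^{\Omega(n)}$ bound transfers.
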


It should be noticed that, unlike the well-known observation that
tree-like and dag-like width are equivalent measures for Resolution,
these are \emph{not} equivalent to width in Circular Resolution.
The sparse graph pigeonhole principle from
Section~\ref{sec:resolution} illustrates the point. This shows that
Circular Resolution proofs of bounded-width proofs cannot be
\emph{unfolded} into bounded-width tree-like Resolution proofs.

This observation also explains, perhaps, why our proof that Circular
Frege simulates Tree-like Frege goes via a very indirect translation.
It also raises one further question (and answer).  It is known that
Tree-like Bounded-Depth Frege simulates Dag-like Bounded-Depth Frege,
at the cost of increasing the depth by one. Could the simulation of
Circular Frege by Tree-like Frege be made to preserve bounded depth?
The (negative) answer is also provided by the pigeonhole principle
which is known to be hard for Bounded-Depth Frege
\cite{Ajtai1988,PitassiBeameImpagliazzo1993,KrajicekPudlakWoods1995},
but is easy for Circular Resolution, and hence for
Circular~Depth-1~Frege.

One last aspect of the equivalence between Circular Resolution and
Sherali-Adams concerns the theory of SAT-solving. As is well-known,
state-of-the-art SAT-solvers produce Resolution proofs as certificates
of unsatisfiability and, as a result, will not be able to handle
counting arguments of pigeonhole type. This has motivated the study of
so-called \emph{pseudo-Boolean solvers} that handle counting
constraints and reasoning through specialized syntax and inference
rules. The equivalence between Circular Resolution and Sherali-Adams
suggests a completely different approach to incorporate counting
capabilities: instead of enhancing the syntax, keep it to clauses but
enhance the \emph{proof-shapes}. Whether circular proof-shapes can be
handled in a sufficiently effective and efficient way is of course in
doubt, but certainly a question worth studying.

It turns out that Circular Resolution has unexpected connections with
Dual Rail MaxSAT Resolution~\cite{IgnatievEtAl2017TacklingLimits}.
MaxSAT Resolution is a variant of resolution where proofs give upper
bounds on the number of clauses of the CNF that can be satisfied
simultaneously. At the very least, when the upper bound is less than
the number of clauses, MaxSAT resolution provides a refutation of
the formula.
The Dual Rail encoding is a special encoding of CNF formulas, and Dual
Rail MaxSAT Resolution is defined to be MaxSAT resolution applied to the
Dual Rail encoding of the input formula.
It is argued in \cite{BonetEtAl2018MaxsatResolution} that Dual
Rail encoding gives strength to the proof system, providing a polynomial
refutation of the pigeonhole principle formula.
Following the conference version of this paper
\cite{AtseriasLauria2019}, it was argued in~\cite{Vinyals2019DualRail}
that Circular Resolution polynomially simulates Dual Rail MaxSAT
Resolution, in the sense that when the Dual Rail encoding of a CNF
formula~$F$ has a MaxSAT Resolution refutation of length~$\ell$ and
width~$w$, then~$F$ has a Circular Resolution refutation of
length~$O(\ell w)$.  This is interesting per se and provides yet
another proof of Corollary~\ref{cor:upperbound}.  The exact relative
strength of Circular Resolution, Dual Rail MaxSAT, and other systems
for MaxSAT is studied further in \cite{BonetLevy2020,
  LarrosaRollon2020}.

\bigskip
\noindent\textbf{Acknowledgments}\; Both authors were partially funded
by European Research Council (ERC) under the European Union's Horizon
2020 research and innovation programme, grant agreement ERC-2014-CoG
648276 (AUTAR). First author partially funded by MINECO through
TIN2013-48031-C4-1-P (TASSAT2). We acknowledge the work of Jordi Coll
who conducted experimental results for finding and visualizing actual
circular resolution proofs of small instances of the sparse pigeonhole
principle. We thank Moritz M\"uller for comments and for carefully
reading an earlier version of the paper.

\bibliographystyle{plain}

\end{document}